\providecommand{\U}[1]{\protect\rule{.1in}{.1in}}
\newtheorem{theorema}{Theorem}
\newtheorem{theorem}{Theorem}
\newtheorem{corollary}[theorem]{Corollary}
\newtheorem{lemma}[theorem]{Lemma}
\newtheorem{proposition}[theorem]{Proposition}
\newenvironment{proof}[1][Proof]{\noindent\textbf{#1.} }{\ \rule{0.5em}{0.5em}}
\def\Rev{\normalfont\textsc{Rev}}
\newcommand\ignore[1]{}
\begin{document}

\title{The Root of Revenue Continuity\thanks{Previous versions: June 2024; July 2025 (\texttt{arXiv}:2507.15735v1). We thank
Yannai Gonczarowski for his useful comments.}}
\author{Sergiu Hart\thanks{%
The Hebrew University of Jerusalem (Federmann Center for the Study of
Rationality, Department of Economics, and Department of Mathematics).\quad 
\emph{E-mail}: \texttt{hart@huji.ac.il} \quad \emph{Web site}: \texttt{%
http://www.ma.huji.ac.il/hart}} \and Noam Nisan\thanks{%
The Hebrew University of Jerusalem (Federmann Center for the Study of
Rationality and School of Computer Science and Engineering). 
Supported by a grant from the Israeli Science Foundation (ISF number 505/23).\emph{E-mail}: 
\texttt{noam@cs.huji.ac.il} \quad \emph{Web site}: \texttt{%
http://www.cs.huji.ac.il/\symbol{126}noam}}}
\maketitle

\begin{abstract}
    In the setup of selling one or more goods, 
various papers have shown, in various forms and for various purposes, 
that a small change in the distribution of a buyer's valuations may cause only a 
small change in the possible revenue that can be extracted.  
We prove a simple,
    clean, convenient, 
    and general statement to this effect:
    let $X$ and $Y$ be random valuations on $k$ additive goods, 
    let $\Rev(X)$ and $\Rev(Y)$ 
    denote their optimal revenues, 
    and let $\mathrm{W}(X,Y)$ be the Wasserstein (or ``earth mover's'') 
    distance  between them; then 
$$\left\vert \sqrt{\Rev(X)}-\sqrt{\Rev(Y)}\right\vert \leq\sqrt{\mathrm{W}(X,Y)}.$$ 
This further implies that a simple explicit modification of any optimal mechanism for $X$, namely, ``uniform discounting,'' is guaranteed to be almost optimal for any $Y$ that is close to $X$ in the Wasserstein distance.
   
\end{abstract}

\section{Introduction}

Fix a mechanism for selling some good. A small change in the buyer's value for
the good---i.e., his \textquotedblleft willingness to pay" for the good---may
result in a huge change in the seller's revenue. For example, suppose that the
seller's mechanism is to sell the good for the price of \$1. If the buyer's
value for the good is \$1.01 then the seller's revenue will be \$1, while if
the buyer's value is only \$0.99 the seller's revenue will be \$0. Thus, we
see that the revenue of a mechanism is \emph{not} a continuous function of the
buyer's value.

If, on the other hand, we are allowed to modify the mechanism once the buyer's
valuation has changed, then we can recover almost all of the revenue. In our
example, if the seller, once observing the buyer's change of value from \$1.01 to
\$0.99, is now allowed to change the price to \$0.98, then the revenue
becomes \$0.98, close to the original \$1. Thus, if we look at the revenue of
the \emph{best} mechanism, which may depend on the buyer's valuation, then it
seems to be continuous in the buyer's valuation. The issue becomes of course
much more demanding when the buyer's valuation is only \textquotedblleft
probabilistically known," i.e., as a probability distribution on valuations
(this is the seller's Bayesian belief on the buyer' valuation; formally, we
refer to it as a \textquotedblleft random valuation").

For the case of selling a single good, the above intuition may indeed be
easily formalized utilizing Myerson's characterization of (revenue-)optimal
mechanisms \cite{Mye81} (see \cite{Mon15} for such a proof). However, when
selling multiple goods things become trickier, in that optimal
mechanisms may be complex and involve multiple \textquotedblleft
menu items,\textquotedblright\ i.e., multiple allocations of goods---possibly,
fractional---and prices \cite{BCKW10, CHMS10, DW11, Tha04, MM88, MV06, DDT13, DDT15}.
Nevertheless, in many such multi-good settings the basic idea does go through
and the different prices involved in the mechanism can be simultaneously
manipulated so as to compensate for small changes in the buyer's valuation.
This basic idea has appeared in various forms in multiple papers, e.g.,
\cite{CHK07, HN13, BGN17, DW12, DLN14, MP17, GW21, BCD20}, perhaps first in
\cite{BBHY08} where it was attributed to Nisan. A clean statement to this
effect appears in \cite{HR19}, where it is shown that the revenue (of the
optimal mechanism) is a continuous function of the buyer's random valuation.

Before we proceed more formally, let us mention two common types of scenarios
where such \textquotedblleft revenue-continuity\textquotedblright\ theorems
are useful.

The first scenario is that of value \emph{approximation} or \emph{imprecision}: suppose that
the mechanism relies on some approximation of the buyer's values rather than
the exact values. This may happen either due to some limitation of access to
the valuation, or due to some rounding employed by the mechanism as part of
its algorithm, or just as a method of analysis. 
From a more abstract point of view,
there may be a multitude of reasons for various types of 
imprecision, and this is often termed ``model misspecification'' \cite{MP17}.  In such cases the input (i.e.,
the valuation distribution) changes \textquotedblleft a bit,\textquotedblright%
\ and while the revenue of any fixed mechanism may change \textquotedblleft a
lot,\textquotedblright\ revenue continuity ensures that the \emph{optimal}
revenue can only change \textquotedblleft a bit\textquotedblright\ too.

The second scenario is that of \emph{sampling}: suppose that the mechanism
does not have direct access to the valuation distribution $Y$, but only to
some finite sample of values drawn from $Y$. It is well known that the
empirical distribution on such a sample (i.e., equal probability for each
sample point) will be \textquotedblleft close,\textquotedblright\ with high
probability, to the original distribution $Y$, and thus revenue continuity
will allow us to approximate the revenue of $Y$ utilizing the sample (provided
that our sample is indeed large enough to ensure \textquotedblleft closeness"
with respect to the appropriate notion).

In order to proceed more formally, let us stick to a commonly studied setting
(see Section \ref{s:other Gamma} for alternative settings): a single seller is
selling $k$ goods to a single buyer. The buyer has a value $x_{i}$ for each
good $i$, and this simple setting assumes that the buyer's valuation for every
subset $I\subseteq\{1,...,k\}$ of the goods is $\sum_{i\in I}x_{i}$, i.e.,
additive over the goods. The buyer knows his valuation, but the seller only
knows the distribution $X$ over the buyer's valuations. We will call such an
$X$ a $k$\emph{-good} \emph{random} \emph{valuation}. Formally, think of $X$
as a random variable that takes values in $\mathbb{R}_{+}^{k}\equiv
\mathbb{R}_{\geq0}^{k}$, possibly with arbitrary correlations between the $k$
coordinates, i.e., the values of the $k$ goods. The seller must design a
mechanism for selling the goods that is individually rational (IR) and
incentive compatible (IC)---from now on, the
term \textquotedblleft mechanism"
will always mean an \textquotedblleft IR and IC mechanism"---and we denote
by$\,$\textsc{Rev}$(X)$ the expected revenue of the best mechanism for the
valuation $X.$

Thus, the desired revenue continuity theorem will state that if a $k$-good
random valuation $X$ is \textquotedblleft close\textquotedblright\ to a target
valuation $Y$ then \textsc{Rev}$(X)$ is close to \textsc{Rev}$(Y)$. There are
many different useful notions for closeness of (distributions of) random
variables (see, e.g. \cite{GS02} for a thorough comparison), and we should
obviously attempt to use as \textquotedblleft weak\textquotedblright\ a notion
of distance as possible. In particular, a theorem that uses the strong notion
of total variation distance,\footnote{Defined as the supremum of
$|\mathbb{P}[X\in A]-\mathbb{P}[Y\in A]|$ over all measurable sets
$A\subset\mathbb{R}_{+}^{k}$.} while easily proved, will be rather useless as,
for example, the two applications mentioned above (approximation and sampling)
do not yield distributions that are close in the total variation
distance.\footnote{Still \cite{BCD20} manages to indirectly use this total
variation distance.}

Hart and Reny \cite{HR19} considered the Prohorov distance $\pi$---the
standard metric for convergence in distribution---and showed that for
\emph{bounded} random valuations $X$ and $Y$ we have $|$\textsc{Rev}%
$(X)-$\textsc{Rev}$(Y)|\leq(2M+1)\sqrt{\pi(X,Y)},$ where $M\geq1$ is such
that\footnote{The convenient norm on valuations in $\mathbb{R}^{k}$ is the
$\ell_{1}$-norm, $\left\Vert x\right\Vert _{1}:=\sum_{i=1}^{k}\left\vert
x_{i}\right\vert $.} $\left\Vert
X\right\Vert _{1},\left\Vert Y\right\Vert _{1}\leq M$. This yields continuity
with respect to the Prohorov distance for uniformly bounded random valuations,
but cannot be extended to unbounded valuations (even with bounded
expectations), as can be seen in the following example from \cite{HR19}: let
$M$ be arbitrarily large, let $X$ be the 1-good random valuation that takes
the value $M$ with probability $1/M$ and value $0$ with probability $1-1/M,$
and $Y$ the constant valuation $Y\equiv0$. Then, even though $\pi(X,Y)=1/M$,
we have \textsc{Rev}$(X)=1$ and \textsc{Rev}$(Y)=0$.

We provide here a revenue continuity result that does apply to arbitrary
unbounded valuations, and is based on the \emph{Wasserstein distance}
$\mathrm{W}$, also known as the \textquotedblleft earth-mover's distance"
\cite{RTG00,Dudley02}. The distance $\mathrm{W}(X,Y)$ between two random
variables $X$ and $Y$ with values in $\mathbb{R}^{k}$ is defined as the
distance between their best \textquotedblleft coupling";\footnote{In
probability theory, a \emph{coupling} of random variables with given
distributions refers to having them defined on the same probability space;
this allows the use of pointwise comparisons.} specifically, $\mathrm{W}%
(X,Y):=\inf\{\mathbb{E}[\Vert X^{\prime}-Y^{\prime}\Vert_{1}]\},$ with the
infimum taken over all pairs of random variables $X^{\prime}$ and $Y^{\prime}$
that are defined on the same space and are such that $X^{\prime}$ has the same
distribution as $X$ and $Y^{\prime}$ has the same distribution as $Y$.

Our main result is:

\begin{theorema}
\label{th:dist(X,Y)}Let $X$ and $Y$ be $k$-good random valuations such that
the Wasserstein distance $\mathrm{W}(X,Y)$ is finite. Then either both $X$ and
$Y$ have infinite revenue, or both have finite revenue, and then
\begin{equation}
\left\vert \sqrt{\text{\normalfont\textsc{Rev}}(X)}-\sqrt{\text{\normalfont\textsc{Rev}}%
(Y)}\right\vert \leq\sqrt{\mathrm{W}(X,Y)}. \label{eq:sqrt(rev)}%
\end{equation}

\end{theorema}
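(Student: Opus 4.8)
The plan is to prove the equivalent one‑sided bound $\sqrt{\Rev(X)}-\sqrt{\Rev(Y)}\le\sqrt{\W(X,Y)}$, the reverse following by exchanging $X$ and $Y$. First I would reduce to a coupling: fix $\varepsilon>0$ and, by the definition of the Wasserstein distance, a pair $(X,Y)$ living on one probability space with $\mathbb{E}\,\Vert X-Y\Vert_1\le \W(X,Y)+\varepsilon=:w$. Then fix a mechanism $\mu$ for $X$ that is within $\varepsilon$ of optimal (and, in the case $\Rev(X)=\infty$, with $\Rev(\mu,X)$ as large as we wish); this sidesteps having to assume an optimal mechanism exists. Using the menu / taxation‑principle representation, $\mu$ is a menu $\{(a_j,p_j)\}_j$ with allocations $a_j\in[0,1]^k$, prices $p_j\ge 0$, including the null item $(0,0)$; a type $x$ chooses an item $j(x)$ maximizing $\langle a_j,x\rangle-p_j$, so that its indirect utility $u(x):=\langle a_{j(x)},x\rangle-p_{j(x)}=\max_j(\langle a_j,x\rangle-p_j)$ is convex, nondecreasing and $1$‑Lipschitz in $\Vert\cdot\Vert_1$, the allocation $a_{j(x)}$ is a subgradient of $u$ at $x$, and the revenue paid by type $x$ equals $p_{j(x)}$, whence $\Rev(\mu,X)=\mathbb{E}[p_{j(X)}]$.

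The modification is \emph{uniform discounting}: for $\delta\in(0,1)$ let $\mu_\delta$ be the mechanism with the same allocations and all prices scaled by $1-\delta$ — a uniform ``$100\delta\%$ off'' coupon. Applied to $Y$, a type $y$ picks an item $j^\ast(y)$ maximizing $\langle a_j,y\rangle-(1-\delta)p_j$ and pays $(1-\delta)p_{j^\ast(y)}$, so $\Rev(\mu_\delta,Y)=(1-\delta)\,\mathbb{E}[p_{j^\ast(Y)}]$. The heart of the argument is the pointwise estimate, for each realization $(x,y)$ of the coupling,
\[
\delta\,p_{j^\ast(y)}\ \ge\ \delta\,p_{j(x)}-\Vert y-x\Vert_1 .
\]
To obtain it I would compare, for type $y$ inside $\mu_\delta$, its own choice against $x$'s choice $j(x)$: since $\langle a_{j^\ast(y)},y\rangle-p_{j^\ast(y)}\le u(y)$, the left‑hand side of the preference inequality $\langle a_{j^\ast(y)},y\rangle-(1-\delta)p_{j^\ast(y)}\ge\langle a_{j(x)},y\rangle-(1-\delta)p_{j(x)}$ is $\le u(y)+\delta p_{j^\ast(y)}$, while its right‑hand side equals $u(x)+\langle a_{j(x)},y-x\rangle+\delta p_{j(x)}$; hence $\delta p_{j^\ast(y)}\ge\delta p_{j(x)}+\langle a_{j(x)},y-x\rangle+\big(u(x)-u(y)\big)$. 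The two ``error'' terms — the change $\langle a_{j(x)},y-x\rangle$ in the value of $x$'s bundle and the change $u(x)-u(y)$ in indirect utility — must be treated \emph{together}: writing $v=y-x$ and using convexity of $u$ in the form $\langle a_{j(x)},v\rangle\le u(x+v)-u(x)\le\langle g,v\rangle$ for any subgradient $g$ of $u$ at $x+v$, together with $\Vert a_{j(x)}-g\Vert_\infty\le 1$, one gets $\langle a_{j(x)},v\rangle+u(x)-u(x+v)\ge-\Vert v\Vert_1$, which is the displayed estimate. (Bounding the two terms separately only gives $-2\Vert v\Vert_1$, which costs a factor $\sqrt2$ and misses the sharp constant.)

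Taking expectations over the coupling turns the pointwise estimate into $\Rev(\mu_\delta,Y)=(1-\delta)\mathbb{E}[p_{j^\ast(Y)}]\ge(1-\delta)\big(\Rev(\mu,X)-w/\delta\big)$; letting $\varepsilon\to0$ yields $\Rev(Y)\ge(1-\delta)\big(\Rev(X)-\W(X,Y)/\delta\big)$ for every $\delta\in(0,1)$. This already settles the dichotomy: if $\Rev(X)=\infty$ then, taking e.g. $\delta=\tfrac12$, $\Rev(Y)=\infty$, and symmetrically, so the two revenues are simultaneously finite or infinite. When both are finite, the asserted inequality is vacuous if $\W(X,Y)\ge\Rev(X)$, and otherwise the choice $\delta=\sqrt{\W(X,Y)/\Rev(X)}\in(0,1)$ gives $\Rev(Y)\ge\big(\sqrt{\Rev(X)}-\sqrt{\W(X,Y)}\big)^2$, hence $\sqrt{\Rev(X)}-\sqrt{\Rev(Y)}\le\sqrt{\W(X,Y)}$; exchanging $X$ and $Y$ produces \eqref{eq:sqrt(rev)}.

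The main obstacle is the displayed pointwise estimate with constant exactly $1$. One cannot shortcut it via monotonicity of a fixed menu's per‑type revenue in the type — this genuinely fails once there are several goods — so the argument must proceed through the menu‑choice inequality, and the delicate point is that the value‑change and utility‑change errors partially cancel and must be bounded jointly, using convexity of $u$ and the fact that all its subgradients lie in $[0,1]^k$. The remaining ingredients — reduction to a coupling, $\ell_1$‑Lipschitzness of $u$, the single‑variable optimization over $\delta$, and the infinite‑revenue case — are routine.
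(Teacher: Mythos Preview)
Your proposal is correct and follows essentially the same route as the paper: the uniformly discounted mechanism is exactly the paper's rescaled mechanism $\mu_{1/\lambda}$ (with $\lambda=1/(1-\delta)$), your pointwise estimate $\delta\,p_{j^\ast(y)}\ge\delta\,p_{j(x)}-\Vert x-y\Vert_1$ is precisely the paper's Lemma~\ref{l:lambda} rewritten, and the subsequent expectation and optimization over $\delta$ match Lemma~\ref{l:rev-lambda} and Proposition~\ref{p:rev-sqrt}. The only cosmetic difference is that the paper obtains the pointwise bound by directly adding the two IC inequalities for types $x$ and $\lambda y$ (each deviating to the other's report) with weights $1$ and $1/\lambda$, which immediately gives $(q(x)-q(\lambda y))\cdot(x-y)\ge(1-1/\lambda)(s(x)-s(\lambda y))$ and hence the constant~$1$, whereas you reach the same place via convexity of the indirect utility and the subgradient sandwich; the ``joint'' treatment of the two error terms you stress is exactly what the weighted sum of IC inequalities accomplishes in one line.
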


The proof of this elegant (and sharp) inequality is simple; see Section
\ref{s:approx}.

As an immediate consequence of Theorem \ref{th:dist(X,Y)} we get our
convergence result:

\begin{theorema}
\label{th:W-lim}Let $X$ and $\{X_{n}\}_{n\geq1}$ be $k$-good random
valuations. If $X_{n}$ converges to $X$ in the Wasserstein distance, i.e.,
$\mathrm{W}(X_{n},X)\rightarrow0$, then\footnote{More precisely, if
\textsc{Rev}$(X)$ is finite, then so are all but finitely many of the
\textsc{Rev}$(X_{n}),$ and \textsc{Rev}$(X_{n})\rightarrow$\textsc{Rev}$(X);$
and if \textsc{Rev}$(X)$ is infinite, then so are all but finitely many of the
\textsc{Rev}$(X_{n})$.}%
\[
\lim_{n\rightarrow\infty}\text{\normalfont\textsc{Rev}}(X_{n})=\text{\normalfont\textsc{Rev}}(X).
\]

\end{theorema}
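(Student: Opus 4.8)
The plan is to deduce Theorem \ref{th:W-lim} directly from Theorem \ref{th:dist(X,Y)}, so the argument is essentially bookkeeping. First I would use the hypothesis $\mathrm{W}(X_{n},X)\rightarrow0$: given any $\varepsilon>0$ there is an $N$ with $\mathrm{W}(X_{n},X)<\varepsilon$ for all $n\geq N$, and in particular (taking, say, $\varepsilon=1$) there is an index $N_{0}$ beyond which $\mathrm{W}(X_{n},X)$ is finite. Thus for every $n\geq N_{0}$ the pair $(X_{n},X)$ satisfies the finiteness hypothesis of Theorem \ref{th:dist(X,Y)}, and we may invoke both the revenue dichotomy and inequality \eqref{eq:sqrt(rev)} for it.

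I would then split according to whether $\Rev(X)$ is finite. If $\Rev(X)<\infty$, the dichotomy in Theorem \ref{th:dist(X,Y)} rules out the possibility that $X_{n}$ has infinite revenue while $X$ has finite revenue, so $\Rev(X_{n})<\infty$ for all $n\geq N_{0}$; this already establishes the ``all but finitely many are finite'' part of the footnote. Now \eqref{eq:sqrt(rev)} gives $\left|\sqrt{\Rev(X_{n})}-\sqrt{\Rev(X)}\right|\leq\sqrt{\mathrm{W}(X_{n},X)}\rightarrow0$, hence $\sqrt{\Rev(X_{n})}\rightarrow\sqrt{\Rev(X)}$; since $t\mapsto t^{2}$ is continuous on $[0,\infty)$, squaring yields $\Rev(X_{n})\rightarrow\Rev(X)$.

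If instead $\Rev(X)=\infty$, then for each $n\geq N_{0}$ the same dichotomy forces $\Rev(X_{n})=\infty$ as well, because finite revenue for $X_{n}$ together with finite $\mathrm{W}(X_{n},X)$ would, by Theorem \ref{th:dist(X,Y)}, make $\Rev(X)$ finite. Hence all but finitely many of the $\Rev(X_{n})$ are infinite, which is the remaining assertion.

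Since every ingredient is already packaged inside Theorem \ref{th:dist(X,Y)}, there is no genuine obstacle here; the only two points deserving a moment's care are handling the finitely many initial indices $n<N_{0}$ for which $\mathrm{W}(X_{n},X)$ may fail to be finite, and the elementary step of passing from convergence of $\sqrt{\Rev(X_{n})}$ to convergence of $\Rev(X_{n})$ by continuity of squaring on $[0,\infty)$.
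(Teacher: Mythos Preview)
Your proposal is correct and matches the paper's approach: the paper also derives Theorem \ref{th:W-lim} as an immediate consequence of Theorem \ref{th:dist(X,Y)}, without giving any further argument. Your write-up simply spells out the obvious bookkeeping (the finite/infinite dichotomy and the continuity of squaring) that the paper leaves implicit.
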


We emphasize that the random valuations are not required to be bounded or have
finite expectations. Recall that we cannot get a similar convergence result
for unbounded valuations with the Prohorov distance.\footnote{As \cite{HR19}
shows, if $X_{n}$ converges to $X$ with respect to the Prohorov distance
(i.e., in distribution), then, while we do have $\lim\inf_{n}\,$\textsc{Rev}%
$(X_{n})\geq\,$\textsc{Rev}$(X)$, in order to get $\lim_{n}\,$\textsc{Rev}%
$(X_{n})=\,$\textsc{Rev}$(X)$ we must require the sequence $X_{n}$ to be
\emph{uniformly integrable}; cf. Section \ref{s:continuity}.} Theorem
\ref{th:W-lim}, on the other hand, allows us to get revenue continuity also
for unbounded valuations, both in the approximation and in the sampling
applications suggested above.\footnote{For approximation within an additive
$\varepsilon$, the Wasserstein distance between the original valuation and the
approximating one is always bounded by $\varepsilon$; for a multiplicative
approximation (e.g., to a nearest integer power of $(1+\varepsilon)$), finite
expectation of the original valuation suffices; for sampling, a finite higher
moment is required from the original valuation (see, e.g., \cite{FG15}).} See
Section \ref{s:continuity}.

As demonstrated in the first paragraph of the paper, we cannot just expect
that a \textquotedblleft good\textquotedblright\ mechanism for a random
valuation $X$ will yield high revenue also for a \textquotedblleft
close\textquotedblright\ random valuation $Y$, but rather we may need to
modify the original mechanism in order to get good revenue from $Y$ as well.
This modification is often done using what is called \textquotedblleft nudge
and round,\textquotedblright\ where the menu is discretized after some
decrease in prices.  Like some of the previous results in this vein, but unlike most others, our modification turns out to require only the discounting of all
prices by some small uniform factor, with no discretization of allocations
needed.\footnote{For a simple illustration, consider the elementary example at
the beginning of the paper. Assume that the value of the buyer changes, from
\$1.01, either up or down by no more than \$0.02, and the seller does not know which. \emph{Discounting} the
price of \$1.00 by 2\% to \$0.98 \emph{guarantees}, for all such changes, that
the revenue cannot go down by more than \$0.02.} One advantage of this
simple modification is that the discounted mechanism is explicitly given by a straightforward
transformation of the given mechanism (whereas it is not explicitly known when
we change the allocations as well). A second advantage is that the revenue
continuity result holds also for the optimal revenue that can be extracted
when restricted to various specific subclasses of mechanisms---e.g., monotonic
mechanisms \cite{HR15}, buy-many mechanisms \cite{CTT19}, or bounded menu-size mechanisms \cite{HN13}---as long as the
class is closed under such uniform discounting (which all standard classes of
mechanisms satisfy).

Specifically, for a mechanism $\mu$ and a $k$-good random valuation $X$, denote
by $R(\mu;X)$ the revenue that $\mu$ extracts from $X$. For a
\textquotedblleft discount rate\textquotedblright\ $0<\eta<1,$ the resulting
$\eta$-\emph{discounted} mechanism, denoted by $\mu_{1-\eta}$, is the
mechanism where the prices of all menu entries are discounted by $\eta,$ i.e.,
multiplied by the factor\footnote{The discounted mechanism is obtained by an
extremely simple transformation of the original mechanism (see
(\ref{eq:mu_(1-eta)}) in Section \ref{s:rescale}): for each valuation $x$, the allocation of goods that $x$ gets 
in $\mu_{1-\eta}$ is the same as the allocation of goods that the valuation
$x/(1-\eta)$ gets in $\mu,$ and the payment of $x$ in $\mu_{1-\eta}$ is $(1-\eta)$
times the payment of $x/(1-\eta)$ in $\mu.$} $1-\eta$. We show the following:

\begin{theorema}
\label{th:C}For every $\varepsilon,M>0$ with $\varepsilon<2M$, there exist
$\delta>0$ and $0<\eta<1$ such that%
\begin{equation}
R(\mu_{1-\eta};Y)\geq R(\mu;X)-\varepsilon\label{eq:eps}%
\end{equation}
for any two $k$-good random valuations $X$ and $Y$ whose Wasserstein distance
is at most $\delta,$ and for any mechanism $\mu$ such that $R(\mu;X)\leq M;$
moreover, if $\mu$ is optimal for $X$ then $\mu_{1-\eta}$ is $2\varepsilon
$-optimal for\footnote{I.e., $R(\mu_{1-\eta};Y)\geq\,$\textsc{Rev}%
$(Y)-2\varepsilon.$ As $Y$ may have a slightly higher revenue than $X$, we may
lose $2\varepsilon$ relative to the optimal mechanism of $Y,$ even though we
only lose $\varepsilon$ relative to the optimal mechanism of $X$.} $Y.$ In particular, this
holds for $\delta=\varepsilon^{2}/(4M)$ and $\eta=\varepsilon/(2M).$
\end{theorema}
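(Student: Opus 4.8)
The plan is to deduce Theorem~\ref{th:C} from a single ``master inequality'' valid for every mechanism $\mu$, every discount rate $0<\eta<1$, and all $k$-good random valuations $X,Y$:
\[
R(\mu_{1-\eta};Y)\;\geq\;(1-\eta)\,R(\mu;X)\;-\;\tfrac{1}{\eta}\,\mathrm{W}(X,Y).
\]
Given this, \eqref{eq:eps} follows at once by substituting $\eta=\varepsilon/(2M)$---which is indeed $<1$ because $\varepsilon<2M$---and using $\mathrm{W}(X,Y)\leq\delta=\varepsilon^{2}/(4M)$: the two error terms become $\eta\,R(\mu;X)\leq\eta M=\varepsilon/2$ and $\mathrm{W}(X,Y)/\eta\leq\delta/\eta=\varepsilon/2$, whose sum is $\varepsilon$.

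To prove the master inequality I would first isolate its pointwise core. Write $p_{\nu}(z)$ for the price paid by a valuation $z$ in a mechanism $\nu$ (so $R(\nu;Z)=\mathbb{E}[p_{\nu}(Z)]$), and describe $\mu$ by its menu of entries $(q_j,p_j)$ with allocations $q_j\in[0,1]^{k}$ and prices $p_j\geq0$ (including the null entry $(0,0)$), a buyer of valuation $z$ picking an entry maximizing $z\cdot q_j-p_j$; then $\mu_{1-\eta}$ is the mechanism whose menu is $\{(q_j,(1-\eta)p_j)\}$. I claim that for all valuations $x,y$,
\[
p_{\mu_{1-\eta}}(y)\;\geq\;(1-\eta)\,p_{\mu}(x)\;-\;\tfrac{1}{\eta}\,\Vert x-y\Vert_{1}.
\]
Let $(q,p)$ be an entry chosen by $x$ in $\mu$ and $(q',(1-\eta)p')$ an entry chosen by $y$ in $\mu_{1-\eta}$, so $p_{\mu_{1-\eta}}(y)=(1-\eta)p'$. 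If $p'\geq p$ there is nothing to prove. Otherwise $p'<p$, and I combine two revealed-preference inequalities: $x$ prefers $(q,p)$ to $(q',p')$ in $\mu$, i.e.\ $x\cdot(q-q')\geq p-p'$; and $y$ prefers $(q',(1-\eta)p')$ to $(q,(1-\eta)p)$ in $\mu_{1-\eta}$, i.e.\ $y\cdot(q-q')\leq(1-\eta)(p-p')$. Subtracting and using $\Vert q-q'\Vert_{\infty}\leq1$,
\[
\eta(p-p')\;\leq\;(x-y)\cdot(q-q')\;\leq\;\Vert x-y\Vert_{1},
\]
so $p-p'\leq\Vert x-y\Vert_{1}/\eta$, and the claim follows. (If the suprema defining the buyers' choices are not attained, use near-optimal entries and let the slack tend to $0$.)

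The master inequality then follows by integration: choose a coupling $(X',Y')$ with the prescribed marginals and $\mathbb{E}\Vert X'-Y'\Vert_{1}\leq\mathrm{W}(X,Y)+\rho$, apply the pointwise bound at each sample point, and take expectations---harmless because the left-hand side is nonnegative while $\mathbb{E}[p_{\mu}(X')]=R(\mu;X)\leq M<\infty$---then let $\rho\downarrow0$. This establishes \eqref{eq:eps}. For the final clause, if $\mu$ is optimal for $X$ then $R(\mu;X)=\text{\textsc{Rev}}(X)\leq M$, so \eqref{eq:eps} gives $R(\mu_{1-\eta};Y)\geq\text{\textsc{Rev}}(X)-\varepsilon$; and by Theorem~\ref{th:dist(X,Y)}, $\text{\textsc{Rev}}(Y)\leq(\sqrt{\text{\textsc{Rev}}(X)}+\sqrt{\delta})^{2}\leq\text{\textsc{Rev}}(X)+2\sqrt{\delta M}+\delta=\text{\textsc{Rev}}(X)+\varepsilon+\varepsilon^{2}/(4M)$, so that $R(\mu_{1-\eta};Y)\geq\text{\textsc{Rev}}(Y)-2\varepsilon$ up to the lower-order term $\varepsilon^{2}/(4M)$ (which is absorbed, e.g.\ by shrinking $\delta$ a touch).

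The one genuinely delicate step is the pointwise estimate, and its difficulty is conceptual rather than computational: one must compare $y$ in the \emph{discounted} mechanism against $x$ in the \emph{original} mechanism---no analogous pointwise bound can hold against $x$ in $\mu_{1-\eta}$, precisely because the revenue of a fixed mechanism is discontinuous---and one must spot that the ``discount gap'' $\eta(p-p')$ is exactly the quantity that the perturbation $\Vert x-y\Vert_{1}$ controls. Everything beyond that is routine coupling bookkeeping.
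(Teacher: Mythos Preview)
Your approach is essentially the paper's: both combine two IC inequalities (one at $x$, one at the ``scaled'' point) to obtain a master inequality $R(\mu_{1-\eta};Y)\geq(1-\eta)R(\mu;X)-c\cdot\mathrm{W}(X,Y)$, then substitute the stated $\delta,\eta$. The paper phrases the pointwise step as Lemma~\ref{l:lambda} (comparing $s(x)$ with $s(\lambda y)$ inside the \emph{same} mechanism $\mu$, then invoking the rescaling identity $R(\mu;\lambda Y)=\lambda R(\mu_{1/\lambda};Y)$), whereas you compare a choice in $\mu$ directly against a choice in $\mu_{1-\eta}$; these are two parametrizations of the same pair of inequalities.

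One genuine, if small, shortfall: your stated master inequality has coefficient $1/\eta$, but your own pointwise computation actually delivers the sharper $(1-\eta)/\eta$, since from $p-p'\leq\|x-y\|_{1}/\eta$ you get $(1-\eta)p'\geq(1-\eta)p-\tfrac{1-\eta}{\eta}\|x-y\|_{1}$. The paper keeps this sharper constant, and it matters for the ``moreover'' clause. With it, the first step gives $R(\mu_{1-\eta};Y)\geq\text{\textsc{Rev}}(X)-(2\sqrt{M\delta}-\delta)$, and the $+\delta$ exactly cancels the $-\delta$ coming from Corollary~\ref{c:M}, yielding precisely $\text{\textsc{Rev}}(Y)-4\sqrt{M\delta}=\text{\textsc{Rev}}(Y)-2\varepsilon$. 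With your weaker constant you are left with the surplus $\varepsilon^{2}/(4M)$, and you cannot ``absorb it by shrinking $\delta$'' because the theorem asserts that the \emph{specific} values $\delta=\varepsilon^{2}/(4M)$ and $\eta=\varepsilon/(2M)$ work. Replace $1/\eta$ by $(1-\eta)/\eta$ in your master inequality and the ``moreover'' goes through exactly.
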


This says that if the Wasserstein distance between $X$ and $Y$ is at most
$\delta,$ and a mechanism $\mu$ extracts an amount $R$ from $X$, then its
$\eta$-discounted version $\mu_{1-\eta}$ is guaranteed to extract at least
$R-\varepsilon$ from $Y;$ here $\delta$ and $\eta$ depend only on
$\varepsilon$ and a bound $M$ on the revenue $R.$

Theorem \ref{th:C} immediately and explicitly gives us the types of learning
applications that we desire: in order to find a mechanism that extracts almost
all the optimal revenue from an unknown random valuation $Y$, all we need to
do is take enough random samples $y_{1},...,y_{n}$ from $Y$ so that the
uniform distribution $X$ over the sample $\{y_{1},...,y_{n}\}$ is, with high
probability, $\delta$-close in the Wasserstein distance to $Y$ (see, e.g.,
\cite{FG15}, for conditions under which this can be done), and then take the
optimal mechanism $\mu$ for $X$ (which can be found by solving a linear
programming problem). Theorem \ref{th:C} ensures\footnote{See Remark (a)
in Section \ref{s:rescale} on the various ways to obtain the bound $M.$} that
the discounted mechanism $\mu_{1-\eta}$ is close to being optimal on $Y$.
Somewhat surprisingly, this holds despite the fact that $\mu$ may be an
extreme \textquotedblleft overfit" for $X$ (as it will be when solving the
linear program). This \textquotedblleft modular\textquotedblright%
\ decomposition of the proof of learnability is conceptually similar yet
simpler and more general than that obtained by \cite{BCD20} who require
(randomized) rounding in order to get convergence in total variation distance
(but their result, unlike ours, also holds, approximately, for multiple buyers).

While our discussion has focused on $k$-good valuations that are additive over
the goods, the approach and the results extend to general spaces of
allocations and valuations, and thus to models such as unit demand and general implementation setups, and also when restricted to deterministic mechanisms. One
just needs to replace the norm that underlies the Wasserstein distance (in our
case, the $\ell_{1}$-norm) with a (semi)norm that fits the corresponding
setup. See Section \ref{s:other Gamma}.

\section{The Basic Model\label{s:model}}

The notation is standard, and follows \cite{HN12,HN13,HR15,HR19}. For
generalizations of the basic additive model, see Section \ref{s:other Gamma}.

One seller (a monopolist) is selling a number $k\geq1$ of indivisible
goods\ (or items, objects, etc.) to one buyer; let $K:=\{1,2,...,k\}$ denote
the set of goods. The goods have no cost or value to the seller, and their
values to the buyer are $x_{1},x_{2},...,x_{k}\geq0$ (one may view the buyer's
\textquotedblleft value" of good $i$ as his \textquotedblleft willingness to
pay" for good $i$). The valuation of getting a set of goods is \emph{additive}%
: each subset $I\subseteq K$ of goods is worth $x(I):=\sum_{i\in I}x_{i}$ to
the buyer (and so, in particular, the buyer's demand is not restricted to one
good only). The valuation of the goods is given by a random variable
$X=(X_{1},X_{2},...,X_{k})$ that takes values in\footnote{For vectors
$x=(x_{1},x_{2},...,x_{k})$ in $\mathbb{R}^{k},$ we write $x\geq0$ when
$x_{i}\geq0$ for all $i$. The
nonnegative orthant is $\mathbb{R}_{+}^{k}=\{x\in\mathbb{R}^{k}:x\geq0\},$ and
$x\cdot y=\sum_{i=1}^{k}x_{i}y_{i}$ is the scalar product of $x$ and $y$ in
$\mathbb{R}^{k}.$} $\mathbb{R}_{+}^{k}$; we will refer to $X$ as a
$k$\emph{-good} \emph{random valuation.}\footnote{Of course, only the
distribution of the random valuation $X$ matters for the revenue. As in
probability theory (e.g., $\mathbb{E}\left[  X\right]  $), working with random
variables rather than their distributions has the advantage that one can easily apply
pointwise operations, such as changing each value by at most $\delta$.} The
realization $x=(x_{1},x_{2},...,x_{k})\in\mathbb{R}_{+}^{k}$ of $X$ is known
to the buyer, but not to the seller, who knows only the distribution of $X$
(which may be viewed as the seller's belief); we refer to a buyer with value
$x$ also as a buyer of \emph{type }$x$. The buyer and the seller are assumed
to be risk neutral and have quasilinear utilities.

The objective is to \emph{maximize} the seller's (expected) \emph{revenue}.

As was well established by the so-called Revelation Principle\ (starting with
\cite{Mye81}; see \cite{Krishna02}), we can restrict ourselves to
\textquotedblleft direct mechanisms" and \textquotedblleft truthful
equilibria.\textquotedblright\ A direct\emph{\ mechanism} $\mu$ consists of a
pair of functions\footnote{All functions are assumed to be Borel-measurable;
see \cite{HR15} (footnotes 10 and 48).} $(q,s)$, where $q=(q_{1},q_{2}%
,...,q_{k}):\mathbb{R}_{+}^{k}\rightarrow\lbrack0,1]^{k}$ and $s:\mathbb{R}%
_{+}^{k}\rightarrow\mathbb{R}$, which prescribe the \emph{allocation} and the
\emph{payment}, respectively. Specifically, if the buyer reports a value
vector $x\in\mathbb{R}_{+}^{k}$, then $q_{i}(x)\in\lbrack0,1]$ is the
probability that the buyer receives good\footnote{An alternative
interpretation of the model is that the goods are infinitely divisible and the
valuation is linear in quantity, in which case $q_{i}$ is the quantity (i.e.,
fraction) of good $i$ that the buyer gets.} $i$ (for each $i=1,2,...,k$), and
$s(x)$ is the payment that the seller receives from the buyer. When the buyer
of type $x$ reports his type truthfully, his payoff is $b(x)=\sum_{i=1}%
^{k}q_{i}(x)x_{i}-s(x)=q(x)\cdot x-s(x)$, and the seller's payoff is $s(x).$

The mechanism $\mu=(q,s)$ satisfies \emph{individual rationality}
(IR\textbf{)} if $b(x)\geq0$ for every $x\in\mathbb{R}_{+}^{k}$; it satisfies
\emph{incentive compatibility} (IC) if $b(x)\geq q(y)\cdot x-s(y)$ for every
alternative report $y\in\mathbb{R}_{+}^{k}$ of the buyer when his value is
$x$, for every $x\in\mathbb{R}_{+}^{k}$. From now on\emph{\ we will assume
that all mechanisms }$\mu$\emph{\ are }$k$\emph{-good mechanisms that are
given in direct form, }$\mu=(q,s):\mathbb{R}_{+}^{k}\rightarrow\lbrack
0,1]^{k}\times\mathbb{R}_{+}$, \emph{and satisfy IC and IR}.\footnote{The IR
condition is not needed for some of the basic results, such as Lemma
\ref{l:lambda} below.}

The (expected) revenue of a mechanism $\mu=(q,s)$ from a buyer with valuation
$X$, which we denote by $R(\mu;X)$, is the expectation of the payment received
by the seller, that is, $R(\mu;X)=\mathbb{E}\left[  s(X)\right]  $. We now define

\begin{itemize}
\item \textsc{Rev}$(X)$, the \emph{optimal revenue}, is the maximal revenue
that can be obtained: \textsc{Rev}$(X):=\sup_{\mu}R(\mu;X)$, where the supremum
is taken over the class of all mechanisms $\mu.$
\end{itemize}

When there is only one good, i.e., when $k=1$, we have \cite{Mye81}
\begin{equation}
\text{\textsc{Rev}}(X)=\sup_{t\geq0}t\cdot(1-F(t)),\label{eq:one good}%
\end{equation}
where $F(t):=\mathbb{P}\left[  X\leq t\right]  $ is the cumulative
distribution function of $X$. Thus, it is optimal for the seller to
\textquotedblleft post" a price $p$ (a maximizer of (\ref{eq:one good})), and
then the buyer buys the good for the price $p$ whenever his value is at least
$p$; in other words, the seller makes the buyer a \textquotedblleft
take-it-or-leave-it" offer to buy the good at price $p.$

Besides the maximal revenue \textsc{Rev}$(X)$, we are also interested in what
can be obtained from certain classes of mechanisms. For any class
$\mathcal{N}$ of mechanisms we denote

\begin{itemize}
\item $\mathcal{N}$-\textsc{Rev}$(X):=\sup_{\mu\in\mathcal{N}}R(\mu;X)$, the
maximal revenue that can be obtained by mechanisms in the class $\mathcal{N}.$
\end{itemize}

Interesting classes of mechanisms are: selling separately, bundled, deterministic,
menu size $m,$ monotonic, allocation monotonic, submodular, supermodular,
subadditive, superadditive, buy-many.

A $k$-good random valuation $X$ is \emph{integrable} if it has finite
expectation: $\mathbb{E}\left[  \left\Vert X\right\Vert \right]  <\infty;$
i.e., $\mathbb{E}\left[  X_{i}\right]  <\infty$ for every $i=1,...,k.$ It is
convenient to use on $\mathbb{R}^{k}$ the $\ell_{1}$\emph{-norm} $\left\Vert
z\right\Vert _{1}=\sum_{i=1}^{k}\left\vert z_{i}\right\vert $; for instance, since $s(x)\leq q(x)\cdot x\leq\left\Vert
x\right\Vert _{1}$ (the first inequality is by IR, the second because
$q\leq(1,1,...,1)$), we have \textsc{Rev}$(X)\leq\mathbb{E}\left[  \left\Vert
X\right\Vert _{1}\right]  $ (and this is tight for constant valuations).

Finally, the \emph{Wasserstein distance}\footnote{Also known as the
\textquotedblleft Kantorovich--Rubinstein metric," and the \textquotedblleft
earth mover's distance" (EMD). See \cite{Dudley02} (Section 11.8)\emph{.}}
$\mathrm{W}(X,Y)$ between two $k$-good random valuations $X$ and $Y$ (more
precisely, between their distributions) is defined as
\begin{align*}
\mathrm{W}(X,Y):= &  \inf\,\{\,\mathbb{E}\left[  \left\Vert X^{\prime
}-Y^{\prime}\right\Vert _{1}\right]  \,:\,X^{\prime},Y^{\prime}\text{ are
defined on the same space,}\\
&  ~\;\;\;\hspace{1.3in}X^{\prime}\overset{\mathcal{D}}{\sim}X\text{ and
}Y^{\prime}\overset{\mathcal{D}}{\sim}Y\},
\end{align*}
where $Z^{\prime}\overset{\mathcal{D}}{\sim}Z$ means that $Z^{\prime}$ and $Z$
have the same distribution (see Sections \ref{s:continuity} and
\ref{s:other Gamma} for alternative distances).  As we do not assume that the
random valuations are integrable, the Wasserstein distance may well be infinite.

\section{Revenue Approximation\label{s:approx}}

We prove here Theorem \ref{th:dist(X,Y)}, which is easily seen to be
equivalent to the following two inequalities always holding, regardless of
whether any of the terms $\mathrm{W}(X,Y),$ \textsc{Rev}$(X),$ and
\textsc{Rev}$(Y)$ are finite or infinite:\footnote{Indeed, the inequalities
matter only when $\mathrm{W}(X,Y)$ is finite, and then, if one of the revenues,
say$\,$\textsc{Rev}$(X)$, is finite, then (\ref{eq:XY}) implies that so is the
other revenue \textsc{Rev}$(Y).$}%
\begin{align}
\sqrt{\text{\textsc{Rev}}(X)} &  \leq\sqrt{\text{\textsc{Rev}}(Y)}%
+\sqrt{\mathrm{W}(X,Y)}\text{\ \ \ \ and}\label{eq:YX}\\
\sqrt{\text{\textsc{Rev}}(Y)} &  \leq\sqrt{\text{\textsc{Rev}}(X)}%
+\sqrt{\mathrm{W}(X,Y)}.\label{eq:XY}%
\end{align}
Proposition \ref{p:rev-sqrt} below proves (\ref{eq:YX}); interchanging $X$ and
$Y$ then yields (\ref{eq:XY}).

We start with a simple lemma that provides the basic inequality that is at the
root of our proof.

\begin{lemma}
\label{l:lambda}Let $\mu=(q,s)$ be an IC mechanism. Then%
\[
s(x)\leq s(\lambda y)+\frac{\lambda}{\lambda-1}\left\Vert x-y\right\Vert _{1}%
\]
for every $x,y\in\mathbb{R}_{+}^{k}$ and $\lambda>1.$
\end{lemma}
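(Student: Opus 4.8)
The plan is to apply incentive compatibility twice and combine the two resulting inequalities, the key idea being to compare the buyer of type $x$ not with $y$ directly but with the scaled-up type $\lambda y$ (which again lies in $\mathbb{R}_+^k$ since $\lambda>1$). Writing $b(z)=q(z)\cdot z-s(z)$ for the truthful payoff, IC for type $x$ with deviating report $\lambda y$ gives $b(x)\ge q(\lambda y)\cdot x-s(\lambda y)$, i.e.
\[
s(x)-s(\lambda y)\le\bigl(q(x)-q(\lambda y)\bigr)\cdot x,
\]
while IC for type $\lambda y$ with deviating report $x$ gives $b(\lambda y)\ge q(x)\cdot(\lambda y)-s(x)$, i.e., after dividing by $\lambda>0$,
\[
\bigl(q(x)-q(\lambda y)\bigr)\cdot y\le\tfrac{1}{\lambda}\bigl(s(x)-s(\lambda y)\bigr).
\]

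Next I would, in the first inequality, split $x=(x-y)+y$ and substitute the second inequality for the $y$-term, obtaining
\[
s(x)-s(\lambda y)\le\bigl(q(x)-q(\lambda y)\bigr)\cdot(x-y)+\tfrac{1}{\lambda}\bigl(s(x)-s(\lambda y)\bigr);
\]
moving the last term to the left (the coefficient $1-1/\lambda=(\lambda-1)/\lambda$ is positive) and multiplying by $\lambda/(\lambda-1)$ yields
\[
s(x)-s(\lambda y)\le\frac{\lambda}{\lambda-1}\,\bigl(q(x)-q(\lambda y)\bigr)\cdot(x-y).
\]

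Finally, since each $q_i$ takes values in $[0,1]$, every coordinate of $q(x)-q(\lambda y)$ lies in $[-1,1]$, so $\lVert q(x)-q(\lambda y)\rVert_\infty\le 1$; H\"older's inequality then bounds the dot product above by $\lVert x-y\rVert_1$, which is exactly the asserted inequality. I do not expect any real obstacle here: the only thing to get right is the bookkeeping, and in particular the observation that introducing the scaling parameter $\lambda$ is precisely what lets the two IC constraints be combined so as to isolate $s(x)-s(\lambda y)$ with the coefficient $\lambda/(\lambda-1)$. Note that individual rationality is never invoked, consistent with the footnote to the statement.
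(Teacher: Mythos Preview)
Your proof is correct and essentially identical to the paper's: you use the same two IC inequalities (type $x$ deviating to $\lambda y$, and type $\lambda y$ deviating to $x$), combine them to obtain $(1-1/\lambda)\bigl(s(x)-s(\lambda y)\bigr)\le\bigl(q(x)-q(\lambda y)\bigr)\cdot(x-y)$, and then bound the right-hand side by $\lVert x-y\rVert_1$ using $q\in[0,1]^k$. The only cosmetic difference is that the paper divides the second IC inequality by $\lambda$ and adds it to the first, whereas you split $x=(x-y)+y$ and substitute; the resulting key inequality and the final bound are the same.
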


\begin{proof}
Consider the two IC inequalities%
\begin{align*}
q(x)\cdot x-s(x)  &  \geq q(\lambda y)\cdot x-s(\lambda y),\\
q(\lambda y)\cdot\lambda y-s(\lambda y)  &  \geq q(x)\cdot\lambda y-s(x).
\end{align*}
Divide the second one by $\lambda$ and add it to the first one to get
\[
(q(x)-q(\lambda y))\cdot(x-y)\geq\left(  1-\frac{1}{\lambda}\right)
(s(x)-s(\lambda y)).
\]
Now each coordinate of $q(x)-q(\lambda y)$ is between $-1$ and $1$ (because
$q(x)$ and $q(\lambda y)$ are in $[0,1]^{k}$), and so the lefthand side is at
most $\sum_{i=1}^{k}\left\vert x_{i}-y_{i}\right\vert =\left\Vert
x-y\right\Vert _{1};$ dividing by $1-1/\lambda>0$ yields the result.
\end{proof}

\bigskip

Applying this to valuations yields:

\begin{lemma}
\label{l:rev-lambda}Let $X$ and $Y$ be two random valuations, and let
$\lambda>1.$ Then%
\begin{equation}
\text{\normalfont\textsc{Rev}}(X)\leq\lambda\text{\normalfont\textsc{Rev}}(Y)+\frac{\lambda}%
{\lambda-1}\mathrm{W}(X,Y). \label{eq:lambda}%
\end{equation}

\end{lemma}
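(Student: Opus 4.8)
The plan is to lift the pointwise inequality of Lemma \ref{l:lambda} from valuations to random valuations by taking expectations over an optimal (or near-optimal) coupling, and simultaneously over a near-optimal mechanism. First I would fix an arbitrary $\varepsilon>0$ and pick a mechanism $\mu=(q,s)$ with $R(\mu;Y)\geq\text{\textsc{Rev}}(Y)-\varepsilon$ if $\text{\textsc{Rev}}(Y)$ is finite (or with $R(\mu;Y)$ arbitrarily large otherwise; one should keep an eye on the infinite-revenue case, but (\ref{eq:lambda}) is vacuous there once $\text{\textsc{Rev}}(Y)=\infty$, so assume $\text{\textsc{Rev}}(Y)<\infty$). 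Next I would choose a coupling $(X',Y')$ on a common probability space with $X'\overset{\mathcal D}{\sim}X$, $Y'\overset{\mathcal D}{\sim}Y$, and $\mathbb E[\|X'-Y'\|_1]\leq \mathrm W(X,Y)+\varepsilon$, using the definition of the Wasserstein distance.

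The core step is then to apply Lemma \ref{l:lambda} pointwise with $x=X'(\omega)$ and $y=Y'(\omega)$: for each $\omega$ (and $\lambda>1$ fixed),
\[
s(X')\leq s(\lambda Y')+\frac{\lambda}{\lambda-1}\|X'-Y'\|_1 .
\]
Taking expectations and using linearity gives $\mathbb E[s(X')]\leq \mathbb E[s(\lambda Y')]+\frac{\lambda}{\lambda-1}\mathbb E[\|X'-Y'\|_1]$. Now I would identify the two expectation terms: since $X'\overset{\mathcal D}{\sim}X$, we have $\mathbb E[s(X')]=R(\mu;X)$, and since $\lambda Y'\overset{\mathcal D}{\sim}\lambda Y$, we have $\mathbb E[s(\lambda Y')]=R(\mu;\lambda Y)$. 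The scaling identity $R(\mu;\lambda Y)=\lambda\, R(\mu_{?};Y)$ — more precisely, the homogeneity $\text{\textsc{Rev}}(\lambda Y)=\lambda\,\text{\textsc{Rev}}(Y)$, or at the level of a single mechanism the fact that rescaling the type space rescales revenue — lets me bound $\mathbb E[s(\lambda Y')]\leq \text{\textsc{Rev}}(\lambda Y)=\lambda\,\text{\textsc{Rev}}(Y)$. (Here I only need the easy direction $R(\mu;\lambda Y)\leq \text{\textsc{Rev}}(\lambda Y)$ together with $\text{\textsc{Rev}}(\lambda Y)=\lambda\text{\textsc{Rev}}(Y)$, which is immediate from the definition by scaling $(q,s)\mapsto(q,\lambda s)$ and the change of variable $x\mapsto x/\lambda$.) Finally, taking the supremum over $\mu$ on the left (or rather, since I chose $\mu$ only near-optimal for the right-hand side, I instead start from a near-optimal $\mu$ for $X$ on the left: let $\mu$ satisfy $R(\mu;X)\geq\text{\textsc{Rev}}(X)-\varepsilon$), I obtain
\[
\text{\textsc{Rev}}(X)-\varepsilon \leq \lambda\,\text{\textsc{Rev}}(Y)+\frac{\lambda}{\lambda-1}\bigl(\mathrm W(X,Y)+\varepsilon\bigr),
\]
and letting $\varepsilon\to 0$ gives (\ref{eq:lambda}).

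The main subtlety — not a deep obstacle, but the place to be careful — is the interchange of the supremum over mechanisms with the expectation/coupling: I must fix one mechanism $\mu$ first (near-optimal for $X$), apply Lemma \ref{l:lambda} with that fixed $\mu$, and only afterwards let $\varepsilon\to 0$; one cannot take the sup inside the expectation. A second minor point is making sure $\mathbb E[s(\lambda Y')]$ is well-defined and bounded: since $\text{\textsc{Rev}}(Y)<\infty$ we may as well assume $\text{\textsc{Rev}}(X)<\infty$ too (otherwise, by symmetry of the roles, nothing to prove for this direction unless both are infinite, in which case (\ref{eq:lambda}) again holds trivially with the convention $\infty\leq\infty$), so $0\leq s(\lambda Y')$ is integrable with finite expectation at most $\lambda\,\text{\textsc{Rev}}(Y)$, and the measurability of $\omega\mapsto s(X'(\omega))$, $\omega\mapsto s(\lambda Y'(\omega))$ follows from Borel-measurability of $s$. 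Everything else is linearity of expectation and the two facts $Z'\overset{\mathcal D}{\sim}Z\Rightarrow\mathbb E[f(Z')]=\mathbb E[f(Z)]$ and $\text{\textsc{Rev}}(\lambda Y)=\lambda\,\text{\textsc{Rev}}(Y)$.
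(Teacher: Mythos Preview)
Your proof is correct and follows the same route as the paper: apply Lemma~\ref{l:lambda} pointwise on a coupling, take expectations, bound $R(\mu;\lambda Y)\leq\text{\textsc{Rev}}(\lambda Y)=\lambda\,\text{\textsc{Rev}}(Y)$, and pass to the supremum over mechanisms. The only differences are cosmetic (your $\varepsilon$-approximations versus the paper's direct inf/sup), though your initial detour of choosing $\mu$ near-optimal for $Y$ is unnecessary---as you yourself note mid-proof, one should take $\mu$ near-optimal for $X$ (or simply arbitrary, then sup at the end).
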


\begin{proof}
Assume that \textsc{Rev}$(Y)$ and $\mathrm{W}(X,Y)$ are both finite (otherwise
there is nothing to prove), and take $X$ and $Y$ to be defined on the same
space with $\mathbb{E}\left[  \left\Vert X-Y\right\Vert _{1}\right]  $ finite.
For every IC mechanism $\mu=(q,s)$ the inequality%
\[
s(X)\leq s(\lambda Y)+\frac{\lambda}{\lambda-1}\left\Vert X-Y\right\Vert _{1}%
\]
holds everywhere by Lemma \ref{l:lambda}. Taking expectation yields%
\[
R(\mu;X)\leq R(\mu;\lambda Y)+\frac{\lambda}{\lambda-1}\mathbb{E}\left[
\left\Vert X-Y\right\Vert _{1}\right]  ,
\]
and then the infimum over all couplings of $X$ and $Y$ gives
\begin{equation}
R(\mu;X)\leq R(\mu;\lambda Y)+\frac{\lambda}{\lambda-1}\mathrm{W}%
(X,Y).\label{eq:R-R}%
\end{equation}
Therefore, taking the supremum over all $\mu$, we get%
\[
\text{\textsc{Rev}}(X)\leq\text{\textsc{Rev}}(\lambda Y)+\frac{\lambda
}{\lambda-1}\mathrm{W}(X,Y)=\lambda\,\text{\textsc{Rev}}(Y)+\frac{\lambda
}{\lambda-1}\mathrm{W}(X,Y)
\]
(for the obvious equality \textsc{Rev}$(\lambda Y)=\lambda\,$\textsc{Rev}$(Y)$
see Proposition \ref{p:rescale} below).
\end{proof}

\bigskip

Optimizing on $\lambda$ then yields

\begin{proposition}
\label{p:rev-sqrt}Let $X$ and $Y$ be two random valuations. Then%
\begin{equation}
\sqrt{\text{\normalfont\textsc{Rev}}(X)}\leq\sqrt{\text{\normalfont\textsc{Rev}}(Y)}+\sqrt
{\mathrm{W}(X,Y)}. \label{eq:sqrt}%
\end{equation}

\end{proposition}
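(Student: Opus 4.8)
The plan is to start from Lemma~\ref{l:rev-lambda}, which gives
\[
\text{\textsc{Rev}}(X)\leq\lambda\,\text{\textsc{Rev}}(Y)+\frac{\lambda}{\lambda-1}\,\mathrm{W}(X,Y)
\]
for every $\lambda>1$, and to choose $\lambda$ so as to make the right-hand side as small as possible. Writing $r:=\text{\textsc{Rev}}(Y)$ and $w:=\mathrm{W}(X,Y)$ (both assumed finite, as otherwise there is nothing to prove), we want to minimize $f(\lambda):=\lambda r+\frac{\lambda}{\lambda-1}w$ over $\lambda>1$. It is convenient to substitute $\mu:=\lambda-1>0$, so that $\lambda/(\lambda-1)=1+1/\mu$ and $f=r+\mu r+w+w/\mu$; the two terms $\mu r$ and $w/\mu$ are minimized jointly by the AM--GM inequality at $\mu=\sqrt{w/r}$, giving $\mu r+w/\mu\geq 2\sqrt{rw}$, hence $f(\lambda)\geq r+w+2\sqrt{rw}=(\sqrt r+\sqrt w)^2$.

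Thus $\text{\textsc{Rev}}(X)\leq(\sqrt{\text{\textsc{Rev}}(Y)}+\sqrt{\mathrm{W}(X,Y)})^{2}$, and taking square roots gives exactly \eqref{eq:sqrt}. One small point to handle: the optimal choice $\mu=\sqrt{w/r}$ corresponds to $\lambda=1+\sqrt{w/r}>1$, which is admissible provided $r>0$; if $r=\text{\textsc{Rev}}(Y)=0$, then $X$ and $Y$ have the same distribution would not follow, but the bound $\text{\textsc{Rev}}(X)\leq\lambda w/(\lambda-1)$ holds for all $\lambda>1$, and letting $\lambda\to\infty$ gives $\text{\textsc{Rev}}(X)\leq w$, i.e.\ $\sqrt{\text{\textsc{Rev}}(X)}\leq\sqrt{\mathrm{W}(X,Y)}$, which is \eqref{eq:sqrt} in this case as well. (Alternatively, one can simply note that $f(\lambda)\geq(\sqrt r+\sqrt w)^2$ holds for all $\lambda>1$ even when $r=0$, by continuity of the bound.)

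There is no real obstacle here: the only thing to get right is the elementary optimization, and the mild case analysis at $r=0$. The result then feeds directly into the equivalence noted after the statement of Theorem~\ref{th:dist(X,Y)}: \eqref{eq:sqrt} is \eqref{eq:YX}, and swapping the roles of $X$ and $Y$ gives \eqref{eq:XY}, which together are equivalent to \eqref{eq:sqrt(rev)}.
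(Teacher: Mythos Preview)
Your proof is correct and follows essentially the same route as the paper: start from Lemma~\ref{l:rev-lambda} and optimize over $\lambda>1$ to obtain $\text{\textsc{Rev}}(X)\leq(\sqrt{\text{\textsc{Rev}}(Y)}+\sqrt{\mathrm{W}(X,Y)})^{2}$, handling the degenerate case $\text{\textsc{Rev}}(Y)=0$ separately. The only cosmetic differences are that you carry out the minimization via the substitution $\lambda-1$ and AM--GM rather than by naming the minimizer directly, and you should avoid reusing the symbol $\mu$, which the paper reserves for mechanisms.
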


\begin{proof}
Assume again that \textsc{Rev}$(Y)$ and $\mathrm{W}(X,Y)$ are both finite
(otherwise, there is nothing to prove). Taking the infimum over all
$\lambda>1$ of (\ref{eq:lambda}) yields%
\[
\text{\textsc{Rev}}(X)\leq\left(  \sqrt{\text{\textsc{Rev}}(Y)}+\sqrt
{\mathrm{W}(X,Y)}\right)  ^{2}.
\]
Indeed, this is immediate when \textsc{Rev}$(Y)$ or $\mathrm{W}(X,Y)$
vanishes; and, when they are both positive, this follows because the infimum is attained at
$\lambda=1+\sqrt{\mathrm{W}(X,Y)/\text{\textsc{Rev}}(Y)}$.
\end{proof}

\bigskip

As stated at the beginning of the section, Proposition \ref{p:rev-sqrt} proves
Theorem \ref{th:dist(X,Y)}.

\bigskip

Now inequality (\ref{eq:sqrt(rev)}) does not yield a finite bound on the
difference\linebreak \textsc{Rev}$(X)-$\textsc{Rev}$(Y)$ between the two revenues;
indeed, this difference may be arbitrarily large while $\sqrt
{\text{\textsc{Rev}}(X)}-\sqrt{\text{\textsc{Rev}}(Y)}$ is arbitrarily small.
For example, for large $M$ let $X$ and $Y$ be the one-good constant valuations $X\equiv
M^{4}+2M$ and $Y\equiv M^{4}$; then \textsc{Rev}$(X)-$%
\textsc{Rev}$(Y)=X-Y=2M,$ while $\sqrt{\text{\textsc{Rev}}(X)}-\sqrt
{\text{\textsc{Rev}}(Y)}=\sqrt{M^{4}+2M}-\sqrt{M^{4}}<1/M$. Moreover, there is
no finite constant $C$ such that $\left\vert \text{\textsc{Rev}}%
(X)-\text{\textsc{Rev}}(Y)\right\vert \leq C\cdot\mathrm{W}(X,Y)$ for all
$X,Y$ (even if limited to bounded $X,Y$). Indeed, for one good, let
$X\equiv1,$ and let $Y$ take the value $1-\varepsilon$ with probability
$\varepsilon$ and $1$ with probability $1-\varepsilon.$ Then \textsc{Rev}%
$(X)=1$ and \textsc{Rev}$(Y)=1-\varepsilon$ (both $1-\varepsilon$ and $1$ are
optimal prices for $Y$), and so \textsc{Rev}$(X)-$\textsc{Rev}$(Y)=\varepsilon
,$ whereas $\mathrm{W}(X,Y)=\mathbb{E}\left[  1-Y\right]  =\varepsilon^{2}$.

However, given a bound on (at least one of) the revenues (such as the 
expectation of the random valuation), we get the following:

\begin{corollary}
\label{c:M}Let $X$ and $Y$ be $k$-good random valuations such that the
Wasserstein distance $\mathrm{W}(X,Y)$ is finite. If one of the two revenues
is bounded by $M$ (i.e., $\min\{$\normalfont\textsc{Rev}$(X),\,$\normalfont\textsc{Rev}$(Y)\}\leq
M$), then%
\begin{equation}
\left\vert \text{\textsc{Rev}}(X)-\text{\textsc{Rev}}(Y)\right\vert \leq
2\sqrt{M\cdot\mathrm{W}(X,Y)}+\mathrm{W}(X,Y).\label{eq:sqrt(d+)}%
\end{equation}
Moreover, if both revenues are bounded by $M$ (i.e., $\max\{$\textsc{Rev}%
$(X),\,$\textsc{Rev}$(Y)\}\leq M$) and\footnote{When both revenues are bounded
by $M$ we trivially have $\left\vert \sqrt{\text{\textsc{Rev}}(X)}%
-\sqrt{\text{\textsc{Rev}}(Y)}\right\vert \leq\sqrt{M},$ and so a Wasserstein
distance $\mathrm{W}(X,Y)$ that exceeds $M$ does not provide any useful bound
in (\ref{eq:sqrt(rev)}); one may thus replace $\mathrm{W}(X,Y)$ with
$\min\{\mathrm{W}(X,Y),M\}$ in this case.} $\mathrm{W}(X,Y)\leq M$, then%
\begin{equation}
\left\vert \text{\textsc{Rev}}(X)-\text{\textsc{Rev}}(Y)\right\vert \leq
2\sqrt{M\cdot\mathrm{W}(X,Y)}-\mathrm{W}(X,Y).\label{eq:sqrt(d)}%
\end{equation}

\end{corollary}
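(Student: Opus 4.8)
The plan is to derive Corollary \ref{c:M} directly from Theorem \ref{th:dist(X,Y)} (equivalently, from Proposition \ref{p:rev-sqrt}) by elementary algebra, using the bound $M$ to control the conversion from the ``square-root metric'' inequality back to an inequality on the revenues themselves. Write $a:=\sqrt{\textsc{Rev}(X)}$, $b:=\sqrt{\textsc{Rev}(Y)}$, and $d:=\sqrt{\mathrm{W}(X,Y)}$, so that Theorem \ref{th:dist(X,Y)} reads $|a-b|\le d$, while the quantity we want to bound is $|a^{2}-b^{2}| = |a-b|\,(a+b)$. The whole matter reduces to bounding the factor $a+b$ under the hypothesis on $M$.

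First I would treat the case where only one of the revenues is bounded by $M$, say (without loss of generality, since the statement is symmetric in $X$ and $Y$) $b^{2}=\textsc{Rev}(Y)\le M$, i.e.\ $b\le\sqrt{M}$. From $|a-b|\le d$ we get $a\le b+d\le\sqrt{M}+d$, hence
\[
a+b \le 2b + |a-b| \le 2\sqrt{M}+d,
\]
and therefore
\[
|a^{2}-b^{2}| = |a-b|(a+b) \le d\,(2\sqrt{M}+d) = 2\sqrt{M}\,d + d^{2} = 2\sqrt{M\cdot\mathrm{W}(X,Y)} + \mathrm{W}(X,Y),
\]
which is exactly \eqref{eq:sqrt(d+)}. (If $\mathrm{W}(X,Y)$ is infinite the claim is vacuous, and if $\textsc{Rev}(Y)$ were infinite then by \eqref{eq:sqrt(rev)} so would $\textsc{Rev}(X)$ be, contradicting finiteness of the distance together with the bound $M$ on the smaller revenue — so both revenues are finite here and all the manipulations above are legitimate.)

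For the second, sharper bound \eqref{eq:sqrt(d)}, I would use the stronger hypothesis that \emph{both} revenues are at most $M$ and that $\mathrm{W}(X,Y)\le M$, i.e.\ $d\le\sqrt{M}$. Now $a\le\sqrt{M}$ as well, so $a+b$ needs a two-sided squeeze: we still have $a+b\le 2\sqrt{M}$, but we also know $a+b\ge |a-b|$ is not what helps; rather, writing $a=b+(a-b)$ with $a-b\in[-d,d]$, one gets $a+b = 2b + (a-b)$, and since $b\le\sqrt{M}$ and $a-b\le d$ we again get $a+b\le 2\sqrt{M}$, but to produce the $-\mathrm{W}(X,Y)$ correction I would instead bound $a+b$ by noting $a+b\le 2\sqrt{M} - (\sqrt{M}-a) - (\sqrt{M}-b)$ and then observe that at least one of $\sqrt{M}-a,\sqrt{M}-b$ is $\ge \tfrac{1}{2}|a-b|$-free — more cleanly: set $t:=|a-b|\le d$; among $a,b$ the larger is $\le\sqrt M$, so the smaller is $\le\sqrt M - t$, whence $a+b\le 2\sqrt M - t$, and thus $|a^{2}-b^{2}| = t(a+b)\le t(2\sqrt M - t)\le 2\sqrt M\, d - d^{2}$ once one checks that $t\mapsto t(2\sqrt M - t)$ is increasing on $[0,\sqrt M]$ (its derivative $2\sqrt M - 2t\ge 0$ there) and $d\le\sqrt M$; this gives $|a^{2}-b^{2}|\le 2\sqrt{M\cdot\mathrm{W}(X,Y)}-\mathrm{W}(X,Y)$, which is \eqref{eq:sqrt(d)}.

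The only subtlety — hardly an obstacle — is the monotonicity check in the last step: the function $t\mapsto t(2\sqrt M - t)$ is a downward parabola peaking at $t=\sqrt M$, so it is increasing precisely on $[0,\sqrt M]$, and since $t=|a-b|\le d\le\sqrt M$ we may replace $t$ by $d$ without decreasing the right-hand side. I would state this parabola observation in one line. Everything else is just the factorization $a^{2}-b^{2}=(a-b)(a+b)$ combined with the triangle-type bound of Theorem \ref{th:dist(X,Y)}; no measure-theoretic or optimization work is needed beyond what is already in Proposition \ref{p:rev-sqrt}.
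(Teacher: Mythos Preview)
Your proof is correct and follows essentially the same route as the paper: both derive the corollary from Theorem~\ref{th:dist(X,Y)} by elementary algebra, the paper by squaring $\sqrt{r_2}\le\sqrt{r_1}+\sqrt{w}$ (and, for \eqref{eq:sqrt(d)}, squaring $\sqrt{r_1}\ge\sqrt{r_2}-\sqrt{w}$ with a case split on whether $r_2\ge w$), while you use the factorization $|a^2-b^2|=|a-b|(a+b)$ together with the parabola-monotonicity observation, which neatly replaces the paper's case split. The mathematical content is the same; your organization for \eqref{eq:sqrt(d)} is arguably a touch cleaner once the meandering before ``more cleanly:'' is excised.
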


\begin{proof}
Let $r_{1}:=$ \textsc{Rev}$(X)$, $r_{2}:=\,$\textsc{Rev}$(Y)$ and
$w:=\mathrm{W}(X,Y).$ Assume w.l.o.g. that $r_{1}\leq r_{2}.$ For
(\ref{eq:sqrt(d+)}), square the inequality (\ref{eq:XY}), i.e., $\sqrt{r_{2}%
}\leq\sqrt{r_{1}}+\sqrt{w},$ and use $r_{1}\leq M.$ For (\ref{eq:sqrt(d)}),
if $r_{2}\geq w$ then square the inequality $\sqrt{r_{1}}\geq\sqrt{r_{2}%
}-\sqrt{w}\geq0$ and use $r_{2}\leq M;$ and if $r_{2}\leq w$ then $\left\vert
r_{1}-r_{2}\right\vert \leq r_{2}\leq w\leq2\sqrt{M\cdot w}-w,$ because $w\leq
M.$
\end{proof}

\bigskip

The inequalities (\ref{eq:sqrt(rev)}), (\ref{eq:sqrt(d+)}) (with $M=\min
\{$\textsc{Rev}$(X),\,$\textsc{Rev}$(Y)\}$), and (\ref{eq:sqrt(d)}) (with
$M=\max\{\text{\textsc{Rev}}(X),\,\text{\textsc{Rev}}(Y)\}$), are all sharp:
take, in the one-good case, $X\equiv c>0$ and $Y\equiv0.$

\bigskip

We now compare the result of Theorem \ref{th:dist(X,Y)} with that of Hart and
Reny \cite{HR19} (Proposition 11 in Appendix A). Recall that the
\emph{Prohorov distance} $\pi(X,Y)$ between (the distributions of) $X$ and $Y$
is given by $\pi(X,Y):=\inf\{\varepsilon>0:\mathbb{P}\left[  X\in A\right]
\leq\mathbb{P}\left[  Y\in A^{\varepsilon}\right]  +\varepsilon,$
$\mathbb{P}\left[  Y\in A\right]  \leq\mathbb{P}\left[  X\in A^{\varepsilon
}\right]  +\varepsilon$ for every Borel set $A\},$ where $A^{\varepsilon
}:=\{x:\,$there is $y\in A$ with $\left\Vert x-y\right\Vert _{1}%
\leq\varepsilon\}$ denotes the $\varepsilon$-neighborhood of the set $A.$ The
Wasserstein distance is stronger than the Prohorov distance, and the two are
equivalent for bounded random variables; specifically,\footnote{The two
inequalities are easily obtained from the \textquotedblleft coupling" result
of Strassen \cite{Str65} (Corollary of Theorem 11); see also
\cite{Billingsley99} (Theorem 6.9) and \cite{GS02}.}
\begin{equation}
\pi(X,Y)\leq\sqrt{\mathrm{W}(X,Y)},\label{eq:pi-le-sqrt(w)}%
\end{equation}
and
\[
\text{if\  }\left\Vert X\right\Vert _{1},\left\Vert Y\right\Vert _{1}\leq
M\text{ \ \ then\ \  }\mathrm{W}(X,Y)\leq(M+1)\pi(X,Y).
\]
Corollary \ref{c:M} (see (\ref{eq:sqrt(d)}) then yields
\[
\left\vert
\text{\textsc{Rev}}(X)-\text{\textsc{Rev}}(Y)\right\vert \leq2\sqrt
{M\cdot(M+1)\pi(X,Y)}-(M+1)\pi(X,Y),
\]
an improvement on the bound of
$(2M+1)\sqrt{\pi(X,Y)}$ of Proposition 11 of \cite{HR19}.

For a simple example of the use of Theorem \ref{th:dist(X,Y)} where
Proposition 11 of \cite{HR19} does not apply: take $X$ to be a random
valuation that is neither bounded nor integrable (such as the 1-good
\textquotedblleft equal revenue" valuation in \cite{HN12}), and let $Y$ be a
$\delta$-perturbation of $X$ (such as moving every value of $X$ by at most
$\delta$ in the $\ell_{1}$-norm, for instance, to a nearby grid point). Thus
$\mathrm{W}(X,Y)\leq\delta$, and so if the revenue of $X$ is finite then
$\left\vert \mathrm{\text{\textsc{Rev}}}(X)-\,\text{\textsc{Rev}%
}(Y)\right\vert \leq2\sqrt{\delta\,\text{\textsc{Rev}}(X)}+\delta$ by
Corollary \ref{c:M} (and, if the revenue of $X$ is infinite then so is the
revenue of $Y$).

\subsection{Mechanism Rescaling\label{sus:rescaling}}

It remains to clarify the \textquotedblleft rescaling" argument. We start from
the trivial observation that changing the unit of the values of all goods by a
fixed factor (e.g., from dollars to cents) does not affect the revenue
(indeed, change the unit of all payments $s(\cdot)$ as well, while keeping the
allocations $q(\cdot)$ unchanged). Equivalently, rescaling the values of all
goods by a constant factor $\lambda>0$ rescales the revenue by that same
factor $\lambda$; i.e.,

\begin{proposition}
\label{p:rescale}For every random valuation $X$ and $\lambda>0,$%
\[
\Rev(\lambda X)=\lambda\,\Rev(X).
\]

\end{proposition}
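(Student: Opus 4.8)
The statement is an exact scaling identity, and the natural route is to exhibit an explicit revenue-scaling bijection between the mechanisms for $X$ and the mechanisms for $\lambda X$, then take suprema. Concretely, the plan is: given any mechanism $\mu=(q,s)$ (IC and IR) for the $k$-good setting, define a new mechanism $\mu^{\lambda}=(q^{\lambda},s^{\lambda})$ by $q^{\lambda}(x):=q(x/\lambda)$ and $s^{\lambda}(x):=\lambda\, s(x/\lambda)$ for all $x\in\mathbb{R}_+^k$. Intuitively this is just a change of monetary unit together with a change of the unit in which goods are valued, so everything should go through.

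\textbf{Key steps.} First I would check that $\mu^{\lambda}$ is again a legitimate mechanism: $q^{\lambda}$ still maps into $[0,1]^k$ (since $q$ does and $x\mapsto x/\lambda$ maps $\mathbb{R}_+^k$ onto itself), $s^{\lambda}$ is Borel-measurable, and $s^{\lambda}\ge 0$ follows from $s\ge 0$. Second, verify IR and IC by direct substitution: writing $b^{\lambda}$ for the buyer's payoff function of $\mu^{\lambda}$, one has $b^{\lambda}(x)=q^{\lambda}(x)\cdot x-s^{\lambda}(x)=q(x/\lambda)\cdot x-\lambda s(x/\lambda)=\lambda\bigl(q(x/\lambda)\cdot(x/\lambda)-s(x/\lambda)\bigr)=\lambda\, b(x/\lambda)$, so IR for $\mu$ at $x/\lambda$ gives IR for $\mu^{\lambda}$ at $x$; and for IC, for any alternative report $y$, $q^{\lambda}(y)\cdot x-s^{\lambda}(y)=\lambda\bigl(q(y/\lambda)\cdot(x/\lambda)-s(y/\lambda)\bigr)$, which is $\le\lambda\,b(x/\lambda)=b^{\lambda}(x)$ by IC of $\mu$ (comparing the report $y/\lambda$ against the true type $x/\lambda$, using that $y\mapsto y/\lambda$ is onto). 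Third, compute the revenue: $R(\mu^{\lambda};\lambda X)=\mathbb{E}[s^{\lambda}(\lambda X)]=\mathbb{E}[\lambda\, s(X)]=\lambda\, R(\mu;X)$.

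\textbf{Conclusion from the steps.} Since $\mu\mapsto\mu^{\lambda}$ is a bijection from mechanisms onto mechanisms (its inverse is the analogous construction with factor $1/\lambda>0$, applied to a mechanism for $\lambda X$), taking the supremum over all $\mu$ gives $\text{\textsc{Rev}}(\lambda X)=\sup_{\mu}R(\mu^{\lambda};\lambda X)=\sup_{\mu}\lambda\, R(\mu;X)=\lambda\,\text{\textsc{Rev}}(X)$, with the identity holding verbatim in the case where both sides are $+\infty$.

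\textbf{Main obstacle.} There is no real obstacle here — the only thing that requires care is the bookkeeping in the IC verification (getting the argument rescalings and the factor $\lambda$ in the right places, and noting that the report map $y\mapsto y/\lambda$ is a surjection of $\mathbb{R}_+^k$ so that quantifying over all reports of $\mu^{\lambda}$ is equivalent to quantifying over all reports of $\mu$). The measurability and nonnegativity checks are immediate.
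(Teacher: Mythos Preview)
Your proposal is correct and is essentially identical to the paper's own proof: the paper introduces exactly the same rescaled mechanism $\mu_{\lambda}=(q_{\lambda},s_{\lambda})$ with $q_{\lambda}(x)=q(x/\lambda)$ and $s_{\lambda}(x)=\lambda\,s(x/\lambda)$, notes that IC and IR are preserved, and obtains $R(\mu_{\lambda};\lambda X)=\lambda\,R(\mu;X)$, whence the result by taking suprema. Your write-up simply spells out the IC/IR verification that the paper leaves as ``straightforward to verify.''
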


For the formal proof we introduce the notion of \emph{mechanism rescaling},
which will be useful in the sequel: given a mechanism $\mu=(q,s)$ and
$\lambda>0,$ define the $\lambda$\emph{-rescaled} mechanism $\mu_{\lambda
}=(q_{\lambda},s_{\lambda})$ by $q_{\lambda}(\lambda x)=q(x)$ and $s_{\lambda
}(\lambda x)=\lambda s(x),$ i.e.,\footnote{In terms of ``pricing functions," if $p$ is
a pricing function of $\mu$ then $\lambda p$ is a pricing function of
$\mu_{\lambda}.$}%
\begin{equation}
q_{\lambda}(x):=q\left(  \frac{1}{\lambda}x\right)  \text{\ \ and\ \ }%
s_{\lambda}(x):=\lambda s\left(  \frac{1}{\lambda}x\right)
\label{eq:mu-lambda}%
\end{equation}
for every $x\in\mathbb{R}_{+}^{k}$. It is straightforward to verify that
$\mu_{\lambda}$ is IC and IR if and only if $\mu$ is IC and IR. The identity
$s_{\lambda}(\lambda x)=\lambda s(x)$ yields
\begin{equation}
R(\mu_{\lambda};\lambda X)=\lambda R(\mu;X), \label{eq:R-lambda}%
\end{equation}
i.e., the amount that the $\lambda$-rescaled mechanism $\mu_{\lambda}$
extracts from $\lambda X$ is exactly $\lambda$ times the amount that the
original mechanism $\mu$ extracts from $X.$ This proves Proposition
\ref{p:rescale}.

\bigskip

We can now provide some additional intuition for the proof of Proposition
\ref{p:rev-sqrt} above: we compare the revenue of a mechanism $\mu$ from $X$
with its revenue from $\lambda Y,$ which is the same as the revenue of the
$1/\lambda$-rescaled mechanism $\tilde{\mu}=\mu_{1/\lambda}$ from $Y$. As
discussed in the Introduction, the argument here is a sharpening and
simplification of the argument of \cite{HR19},\footnote{\cite{HR19} did not
use the fact that $\tilde{\mu}$ is just $\mu_{1/\lambda}$.} which is, in turn,
a simpler version of the argument of Nisan where the allocations were moved as
well (and so the mechanism $\tilde{\mu}$ is only defined implicitly).

\section{Revenue Continuity\label{s:continuity}}

An immediate consequence of the approximation result of Theorem
\ref{th:dist(X,Y)} is the continuity of the revenue with respect to the
Wasserstein distance, i.e., Theorem \ref{th:W-lim}:
\[
\text{if~~~}\mathrm{W}(X_{n},X)\rightarrow0\text{~~~then~~~}\text{\textsc{Rev}%
}(X_{n})\rightarrow\text{\textsc{Rev}}(X).
\]

\bigskip

\noindent\textbf{Remarks. }\emph{(a)} The convergence result holds whether
\textsc{Rev}$(X)$ is finite or infinite (in the latter case \textsc{Rev}%
$(X_{n})$ must be infinite for all but finitely many $n$).

\emph{(b) }The specific Wasserstein distance $\mathrm{W}$ that we have
considered uses the $\ell_{1}$-norm on $\mathbb{R}^{k}$ and then the $L_{1}%
$-norm on $\mathbb{R}^{k}$-valued functions (random valuations, i.e.,
$\mathbb{R}_{+}^{k}$-valued random variables). Using instead the $\ell_{p}$
and $L_{q}$ norms, for $1\leq p,q\leq\infty$---denote the resulting distance
$\mathrm{W}_{L_{q},\ell_{p}}$---can only yield stronger notions of
convergence. Indeed, the $\ell_{1}$ and $\ell_{p}$ norms are equivalent
($\left\Vert z\right\Vert _{p}\leq\left\Vert z\right\Vert _{1}\leq
k^{1-1/p}\left\Vert z\right\Vert _{p}$) and the $L_{1}$ norm is weaker than
the $L_{q}$ norm ($\left\Vert Z\right\Vert _{L_{1}}\leq\left\Vert Z\right\Vert
_{L_{q}}$); thus, $\mathrm{W}\equiv\mathrm{W}_{L_{1},\ell_{1}}\rightarrow0$ if
and only if $\mathrm{W}_{L_{1},\ell_{p}}\rightarrow0,$ and $\mathrm{W}%
\equiv\mathrm{W}_{L_{1},\ell_{1}}\rightarrow0$ is implied by (but need not
imply) $\mathrm{W}_{L_{q},\ell_{p}}\rightarrow0$. Thus,  for any $1\leq p,q\leq
\infty$, we have%
\[
\text{if~~~}\mathrm{W}_{L_{q},\ell_{p}}(X_{n},X)\rightarrow0\text{~~~then~~~}%
\text{\textsc{Rev}}(X_{n})\rightarrow\text{\textsc{Rev}}(X).
\]

\emph{(c)} Convergence with respect to the Wasserstein distance,
$\mathrm{W}(X_{n},X)\rightarrow0$, always implies convergence in distribution
(by (\ref{eq:pi-le-sqrt(w)})), and, when the $X_{n}$ and $X$ are integrable,
it is equivalent to convergence in distribution together with the uniform
integrability of the sequence\footnote{The sequence $X_{n}$ is \emph{uniformly
integrable} if for every $\varepsilon>0$ there is $M<\infty$ such that
$\mathbb{E}\left[  \left\Vert X_{n}\right\Vert \mathbf{1}_{\left\Vert
X_{n}\right\Vert >M}\right]  <\varepsilon$ for all $n$ (it does not matter
which norm $\left\Vert \cdot\right\Vert $ is used on $\mathbb{R}^{k}$; cf.
Remark (b) above).}$^{,}$\footnote{The equivalence is seen by using
Skorokhod's representation theorem \cite{Sko56}, \cite{Billingsley99} (Theorem
6.7).} $X_{n}$. Thus Theorem \ref{th:W-lim} implies Theorem 12 of \cite{HR19}.
However, for convergence in the Wasserstein distance the $X_{n}$ need not be
uniformly integrable, in fact not even integrable (for a simple example, take
$X_{n}=X+Z_{n}$ where $X$ is not integrable and $\mathbb{E}\left[  \left\Vert
Z_{n}\right\Vert \right]  \rightarrow0$).

\section{Approximation by Discount\label{s:rescale}}

The result of Theorem \ref{th:dist(X,Y)} can be restated as follows: given a
random valuation $X$ and a mechanism $\mu$ that extracts $r$ from $X$ (i.e.,
$r=R(\mu;X)$), for every random valuation $Y$ at Wasserstein distance
$\delta=\mathrm{W}(X,Y)$ from $X$ there is a mechanism $\mu^{\prime}$ that
extracts from $Y$ at least $r^{\prime}$ (i.e., $R(\mu^{\prime};Y)\geq r^{\prime}$), where $\sqrt{r^{\prime}}=\sqrt
{r}-\sqrt{\delta}$ (indeed,
$R(\mu;X)=r$ implies \textsc{Rev}$(X)\geq r,$ and so, for $\mu^{\prime}$ that
is optimal for $Y$---assume for now that it exists---we have $\sqrt{R(\mu^{\prime}%
;Y)}=\sqrt{\,\text{\textsc{Rev}}(Y)}\geq\sqrt{\,\text{\textsc{Rev}}(X)}%
-\sqrt{\mathrm{W}(X,Y)}\geq\sqrt{r}-\sqrt{\delta}=\sqrt{r^{\prime}}$). The
mechanism $\mu^{\prime}$ here depends on $Y.$ However, the proof in Section
\ref{s:approx} suggests that $\mu^{\prime}$ may be taken to be just a rescaling of
$\mu.$ This is formalized in Theorem \ref{th:C}, which says in particular the following: to get a loss of no more than
$\varepsilon$ in the revenue extracted (i.e., $r^{\prime}\geq r-\varepsilon$),
one takes as $\mu^{\prime}$ the $\eta$\emph{-discounting} $\mu_{1-\eta}$ of
$\mu$ for some discount rate $0<\eta<1$; moreover, this same mechanism $\mu
^{\prime}=\mu_{1-\eta}$ applies to \emph{all} $Y$ that are at a Wasserstein
distance $\delta$ or less from\footnote{For this specific use of Theorem
\ref{th:C}, where $X$ and $\mu$ are given, use $M=R(\mu;X)$.} $X$. We
recall the elementary \textquotedblleft change of units" transformation that
yields the $\eta$-discounted mechanism $\mu_{1-\eta}=(q_{1-\eta},s_{1-\eta})$
from the mechanism $\mu=(q,s)$:%
\begin{equation}
q_{1-\eta}(x)=q\left(  \frac{1}{1-\eta}x\right)  \text{\ \ and\ \ }s_{1-\eta
}(x)=(1-\eta)s\left(  \frac{1}{1-\eta}x\right)  .\label{eq:mu_(1-eta)}%
\end{equation}

We now prove Theorem \ref{th:C}.

\bigskip

\begin{proof}
[Proof of Theorem \ref{th:C}]Let $X$ and $Y$ be two random valuations, let
$\mu$ be a mechanism, and assume that $\mathrm{W}(X,Y)\leq\delta$ and
$R(\mu;X)\leq M$. For every $\lambda>1,$ dividing inequality (\ref{eq:R-R}) by
$\lambda$ yields%
\[
\frac{1}{\lambda}R(\mu;\lambda Y)\geq\frac{1}{\lambda}R(\mu;X)-\frac
{1}{\lambda-1}\mathrm{W}(X,Y).
\]
The lefthand side is $R(\mu_{1/\lambda};Y)$ (see (\ref{eq:mu-lambda})), and
so, for $1/\lambda=1-\eta,$ we have%
\begin{align*}
R(\mu_{1-\eta};Y)  & \geq(1-\eta)R(\mu;X)-\frac{1-\eta}{\eta}\mathrm{W}%
(X,Y)\\
& =R(\mu;X)-\eta R(\mu;X)-\frac{1-\eta}{\eta}\mathrm{W}(X,Y)\\
& \geq R(\mu;X)-\left(  \eta M+\frac{1-\eta}{\eta}\delta\right)  .
\end{align*}
When\footnote{At this point one can just substitute $\delta=\varepsilon
^{2}/(4M)$ and $\eta=\varepsilon/(2M)$ and complete the proof. The minimizing
argument that we use shows that this choice yields essentially the highest $\delta$
(see also Remark (d) after the proof).} $\eta=\sqrt{\delta/M}<1$ the
expression $\eta M+(1-\eta)\delta/\eta$ is minimal and equals $2\sqrt{M\delta
}-\delta;$ thus,
\begin{equation}
R(\mu_{1-\eta};Y)\geq R(\mu;X)-\left(  2\sqrt{M\delta}-\delta\right)
.\label{eq:1-eta}%
\end{equation}
Taking $\delta=\varepsilon^{2}/(4M)$ together with $\eta=\sqrt{\delta
/M}=\varepsilon/(2M)$ (which is indeed $<1$) yields inequality (\ref{eq:eps}),
and the proof is complete.

For the moreover statement, if $\mu$ is optimal for $X$ then, again for
$\eta=\sqrt{\delta/M},$ we have%
\begin{align*}
R(\mu_{1-\eta};Y) &  \geq R(\mu;X)-\left(  2\sqrt{M\delta}-\delta\right)  \\
&  =\text{\textsc{Rev}}(X)-2\sqrt{M\delta}+\delta\\
&  \geq\left(  \text{\textsc{Rev}}(Y)-2\sqrt{M\delta}-\delta\right)
-2\sqrt{M\delta}+\delta\\
&  =\text{\textsc{Rev}}(Y)-4\sqrt{M\delta}%
\end{align*}
(for the third line we used (\ref{eq:sqrt(d)}) in Corollary \ref{c:M}); now
$4\sqrt{M\delta}=2\varepsilon$ for $\delta=\varepsilon^{2}/(4M).$
\end{proof}

\bigskip

\noindent\textbf{Remarks. }\emph{(a)} The additive approximation in
(\ref{eq:eps}) needs, in order to determine the appropriate $\delta$ and
$\eta,$ an upper bound $M$ on the revenue that $\mu$ extracts from $X.$ If
$R(\mu;X)$ has been computed, one may well use it as $M.$ Otherwise, such
bounds may be $M=\,$\textsc{Rev}$(X)$ or $M=\mathbb{E}\left[  \left\Vert
X\right\Vert _{1}\right]  .$ If we have instead a bound $M_{Y}$ on the revenue
of the other valuation $Y,\,$i.e., $M_{Y}\geq\,$\textsc{Rev}$(Y),$ for
instance $M_{Y}=\mathbb{E}\left[  \left\Vert Y\right\Vert _{1}\right]  ,$ then
we can take $M=\left(  \sqrt{M_{Y}}+\sqrt{\varepsilon/2}\right)  ^{2},$
because $R(\mu;X)\leq\,$\textsc{Rev}$(X)\leq\left(  \sqrt{\text{\textsc{Rev}%
}(Y)}+\sqrt{\delta}\right)  ^{2}<\left(  \sqrt{M_{Y}}+\sqrt{\varepsilon
/2}\right)  ^{2}$ by Theorem \ref{th:dist(X,Y)} and $\delta = \varepsilon^2/(4M)<\varepsilon/2$ (since $\varepsilon < 2M$). This is relevant in
applications (such as sampling, as discussed in the Introduction) where we
want to determine the required bound $\delta$ on the Wasserstein distance
before we have access to $X.$

(\emph{b)} If a multiplicative approximation suffices then we can dispose of
the bound $M.$ Indeed, in the above proof we obtained%
\[
R(\mu_{1-\eta};Y)\geq(1-\eta)R(\mu;X)-\frac{1-\eta}{\eta}\delta,
\]
and so, for $\eta=\varepsilon$ and $\delta=\varepsilon^{2}/(1-\varepsilon)$
(assume that $\varepsilon<1$),%
\[
R(\mu_{1-\eta};Y)\geq(1-\varepsilon)R(\mu;X)-\varepsilon.
\]

\emph{(c)} If $\mu$ is approximately optimal for $X,$ specifically, $\rho
$-optimal for some $\rho>0$ (i.e., $R(\mu;X)\geq\,$\textsc{Rev}$(X)-\rho$),
then a simple adaptation of the above argument shows that $\mu_{1-\eta}$ is
guaranteed to be $(\rho+2\varepsilon)$-optimal for any $Y$ that is $\delta
$-close to $X$ in the Wasserstein distance.\footnote{For simplicity, use
$M\geq\,$\textsc{Rev}$(X)$ here.}

\emph{(d) }While one may always replace $\delta$ with a smaller $\delta
^{\prime}<\delta$, the discounting parameter $\eta$ needs to be
\textquotedblleft right," i.e., small but not too small (because in the
expression $\eta M+(1-\eta)\delta/\eta$ both $\eta$ and $1/\eta$ appear;
recall that the choice $\eta=\sqrt{\delta/M}$ was best). Also, any choices
of $\delta$ and $\eta$ must satisfy $\eta=O(\varepsilon/M)$ and $\delta
=O(\varepsilon^{2}/M)\ $as $\varepsilon\rightarrow0$ (because $\eta M$ and
$(1-\eta)\delta/\eta$ must both be $\leq\varepsilon$).

\emph{(e)} The following is an alternative formulation of the statement at the
beginning of this section: for any two random valuations $X$ and $Y$ and
mechanism $\mu$ such that\footnote{When $\mathrm{W}(X,Y)\geq R(\mu;X)$ the
inequality (\ref{eq:sqrt(mu-eta)}) is trivially satisfied by any $\mu^{\prime
}$ in place of $\mu_{1-\eta}$. } $\mathrm{W}(X,Y)<R(\mu;X)<\infty,$ the $\eta
$-discounting $\mu_{1-\eta}$ of $\mu$ with
\[
\eta=\sqrt{\frac{\mathrm{W}(X,Y)}{R(\mu;X)}}%
\]
satisfies%
\begin{equation}
\sqrt{R(\mu_{1-\eta};Y)}\geq\sqrt{R(\mu;X)}-\sqrt{\mathrm{W}(X,Y)}%
.\label{eq:sqrt(mu-eta)}%
\end{equation}
Indeed, (\ref{eq:1-eta}), with $M=R(\mu;X),$ $\delta=\mathrm{W}(X,Y)$, and
$\eta=\sqrt{\delta/M}$, yields $R(\mu_{1-\eta};Y)\geq M-2\sqrt{M\delta}%
+\delta=(\sqrt{M}-\sqrt{\delta})^{2},$ and hence (\ref{eq:sqrt(mu-eta)}) by
taking the square root (recall that we have assumed $M>\delta$).

\section{Classes of Mechanisms\label{s:classes of mechanisms}}

A class of mechanisms $\mathcal{N}$ is closed under \emph{mechanism}
\emph{rescaling} if for every mechanism $\mu$ belonging to the class
$\mathcal{N}$, all its rescaled mechanisms $\mu_{\lambda}$ for $\lambda>0$ also
belong to $\mathcal{N}$. Since only rescaled mechanisms were used to obtain
the results of the previous sections, they extend \emph{mutatis mutandis} to
any such $\mathcal{N}.$ This includes all standard classes of mechanisms, such
as separate (\textsc{SRev}), bundled (\textsc{BRev}), deterministic
(\textsc{DRev}), finite menu (\textsc{MRev}$_{[m]}$), monotonic
(\textsc{MonRev}), allocation monotonic (\textsc{AMonRev}), submodular
(\textsc{SubModRev}), supermodular (\textsc{SupModRev}), and buy-many.\footnote{We
do not know any interesting class of mechanisms that does not satisfy this
condition.} Formally,

\begin{theorem}
Let $\mathcal{N}$ be a class of mechanisms that is closed under mechanism rescaling.

(i) Let $X$ and $Y$ be $k$-good random valuations such that the Wasserstein
distance $\mathrm{W}(X,Y)$ is finite. Then either $X$ and $Y$ both have
infinite $\mathcal{N}$-revenue, or both have finite $\mathcal{N}$-revenue,
and
\[
\left\vert \sqrt{\mathcal{N}\text{-}\Rev(X)}-\sqrt{\mathcal{N}\text{-}\Rev(Y)}\right\vert \leq\sqrt{\mathrm{W}(X,Y)}.
\]
(ii) Let $X$ and $\{X_{n}\}_{n\geq1}$ be $k$-good random valuations. If
$X_{n}$ converges to $X$ in the Wasserstein distance, i.e., $\mathrm{W}%
(X_{n},X)\rightarrow0$, then%
\[
\lim_{n\rightarrow\infty}\mathcal{N}\text{-}\Rev(X_{n})=\mathcal{N}%
\text{-}\Rev(X).
\]

\end{theorem}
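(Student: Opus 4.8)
The plan is to rerun the chain of arguments from Sections \ref{s:approx} and \ref{s:rescale} almost verbatim, observing that the only mechanisms appearing in those proofs — besides the given $\mu$ — are its rescalings $\mu_{\lambda}$, so that closure of $\mathcal{N}$ under mechanism rescaling is exactly the hypothesis needed to keep everything inside $\mathcal{N}$. For part (i), I would start from Lemma \ref{l:lambda}, which holds for any IC mechanism and in particular for any $\mu=(q,s)\in\mathcal{N}$ (recall that IR is not used there). Fixing a coupling of $X$ and $Y$ with $\mathbb{E}[\Vert X-Y\Vert_{1}]$ finite, taking expectations as in the proof of Lemma \ref{l:rev-lambda}, and then passing to the infimum over couplings, gives, for every $\lambda>1$, the inequality (\ref{eq:R-R}):
\[
R(\mu;X)\le R(\mu;\lambda Y)+\frac{\lambda}{\lambda-1}\mathrm{W}(X,Y).
\]
Here $R(\mu;\lambda Y)=\lambda\,R(\mu_{1/\lambda};Y)$ by (\ref{eq:mu-lambda}) (this is the identity already used in the proof of Theorem \ref{th:C}), and since $\mathcal{N}$ is closed under rescaling we have $\mu_{1/\lambda}\in\mathcal{N}$, hence $R(\mu_{1/\lambda};Y)\le\mathcal{N}\text{-\textsc{Rev}}(Y)$. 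Therefore $R(\mu;X)\le\lambda\,\mathcal{N}\text{-\textsc{Rev}}(Y)+\frac{\lambda}{\lambda-1}\mathrm{W}(X,Y)$, and taking the supremum over $\mu\in\mathcal{N}$ yields the analogue of (\ref{eq:lambda}),
\[
\mathcal{N}\text{-\textsc{Rev}}(X)\le\lambda\,\mathcal{N}\text{-\textsc{Rev}}(Y)+\frac{\lambda}{\lambda-1}\mathrm{W}(X,Y).
\]

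Next I would optimize over $\lambda>1$ exactly as in the proof of Proposition \ref{p:rev-sqrt} (the minimizer is $\lambda=1+\sqrt{\mathrm{W}(X,Y)/\mathcal{N}\text{-\textsc{Rev}}(Y)}$ when both quantities are positive, with the vanishing cases immediate), obtaining
\[
\sqrt{\mathcal{N}\text{-\textsc{Rev}}(X)}\le\sqrt{\mathcal{N}\text{-\textsc{Rev}}(Y)}+\sqrt{\mathrm{W}(X,Y)},
\]
and interchanging $X$ and $Y$ gives the reverse inequality; together they yield the bound of part (i), including the dichotomy that finiteness of one $\mathcal{N}$-revenue forces finiteness of the other (when $\mathrm{W}(X,Y)$ is finite). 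Part (ii) is then immediate in the same way Theorem \ref{th:W-lim} follows from Theorem \ref{th:dist(X,Y)}: if $\mathrm{W}(X_{n},X)\to0$ then part (i) gives $|\sqrt{\mathcal{N}\text{-\textsc{Rev}}(X_{n})}-\sqrt{\mathcal{N}\text{-\textsc{Rev}}(X)}|\le\sqrt{\mathrm{W}(X_{n},X)}\to0$, and the finite/infinite bookkeeping for the $X_{n}$ is as before.

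I do not expect any genuine obstacle; the work is bookkeeping rather than mathematics — namely checking that no step secretly invokes a mechanism outside $\mathcal{N}$ (it does not: the argument of Section \ref{sus:rescaling} already isolates $\mu_{1/\lambda}$ as the sole auxiliary mechanism), and that the degenerate case $\mathcal{N}=\emptyset$, or $\mathcal{N}\text{-\textsc{Rev}}\equiv0$, is harmless (the inequalities then hold trivially, and all the standard classes listed contain at least the null mechanism). One may also note in passing that the discounting statement of Theorem \ref{th:C} carries over verbatim to such $\mathcal{N}$, since $\mu_{1-\eta}$ is a rescaling of $\mu$, although that is not part of the statement proved here.
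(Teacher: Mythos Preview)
Your proposal is correct and follows exactly the approach the paper itself indicates: the paper does not give a separate proof but simply notes that since only rescaled mechanisms were used in Sections~\ref{s:approx}--\ref{s:rescale}, the results extend \emph{mutatis mutandis} to any $\mathcal{N}$ closed under rescaling. Your write-up spells out those steps explicitly (in particular the identity $R(\mu;\lambda Y)=\lambda\,R(\mu_{1/\lambda};Y)$ together with $\mu_{1/\lambda}\in\mathcal{N}$), which is precisely the content of the paper's one-line justification.
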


\bigskip

Finally, we observe that, since discounted mechanisms are rescaled mechanisms,
the mechanism $\mu_{1-\eta}$ of Theorem \ref{th:C} is in the same class
$\mathcal{N}$ as $\mu$ (for any $\mathcal{N}$ that is closed under mechanism rescaling).

\section{Other Allocation Spaces\label{s:other Gamma}}

Let $\Gamma\subset\mathbb{R}^{k}_+$ be a compact space of allocations. A
mechanism $\mu=(q,s)$ is a $\Gamma$-\emph{mechanism}\textbf{\ }if\textbf{\ }%
$q(x)\in\Gamma$ for every $x\in\mathbb{R}_{+}^{k}.$ We denote by
\textsc{Rev}$_{\Gamma}(X)$ the maximal revenue that can be achieved by IC and
IR $\Gamma$-mechanisms. Examples:

\begin{enumerate}
\item \emph{additive demand} of $k$ goods (the model of the previous sections): $\Gamma_{1}=[0,1]^{k}$ (general mechanisms), and $\Gamma_{1}^D=\{0,1\}^{k}$ (deterministic mechanisms); 

\item \emph{unit demand} of $k$ goods, or \emph{implementation} of $k$
outcomes with a null (no participation) option\footnote{In case 2 one may choose
one of the $k$ options or \textquotedblleft nothing"; in case 3 one
\emph{must} choose one of the $k$ options.}: $\Gamma_{2}=\{g\in\lbrack
0,1]^{k}:\sum_{i}g_{i}\leq1\}$ (general mechanisms) and $\Gamma_2^D = \{\mathbf{0},e^1,...,e^k\}$ (deterministic mechanisms), where $\mathbf{0}$ denotes the all-$0$ vector, and $e^i$ the $i$th unit vector;

\item \emph{implementation} of $k$ outcomes (without a null option):
$\Gamma_{3}=\{g\in\lbrack0,1]^{k}:\sum_{i}g_{i}=1\}$ (general mechanisms) and
$\Gamma_3^D = \{e^1,...,e^k\}$ (deterministic mechanisms).

\end{enumerate}

Each such set $\Gamma$ defines a \emph{dual }$\Gamma$-\emph{seminorm}%
\footnote{This says that $\rho_{\Gamma}(\alpha z)=\left\vert \alpha\right\vert
\rho_{\Gamma}(z)$ for any scalar $\alpha$ and $\rho_{\Gamma}(z+z^{\prime}%
)\leq\rho_{\Gamma}(z)+\rho_{\Gamma}(z^{\prime}).$ To be a norm one needs in
addition that $\rho_{\Gamma}(z)=0$ (if and) only if $z=\mathbf{0}$. This holds
when $B_{\Gamma}$ is fully dimensional, as is the case for $\Gamma_{1}$ and
$\Gamma_{2}$, but not for $\Gamma_{3}$ (where $B_{\Gamma_{3}}\subseteq
\{h:\sum_{i}h_{i}=0\}$ and $\rho_{\Gamma_{3}}(1,...,1)=0$).} on $\mathbb{R}%
^{k}$ by taking $B_{\Gamma}:=\rm{conv}(\Gamma-\Gamma)$ as the unit ball:%
\[
\rho_{\Gamma}(z):=\max_{h\in B_{\Gamma}}h\cdot z=\max_{g\in\Gamma}g\cdot
z-\min_{g\in\Gamma}g\cdot z.
\]
In the examples above we get (in each case, in the general setup as well as in the deterministic setup):

\begin{enumerate}
\item $\rho_{\Gamma_{1}}(z)=\sum_{i}\left\vert z_{i}\right\vert =\left\Vert
z\right\Vert _{1}$ (here $B_{\Gamma_{1}}=B_{\Gamma_{4}}$ is the $\ell_{\infty}$ unit ball,
and so $\rho_{\Gamma_{1}}$ is the $\ell_{1}$-norm);

\item $\rho_{\Gamma_{2}}(z)=\max\{z_1,...,z_k,0\}-\min\{z_1,...,z_k,0\}$;

\item $\rho_{\Gamma_{3}}(z)=\max_{i}z_{i}-\min_{i}z_{i}$ (the
\textquotedblleft\emph{spread"} of $z$).
\end{enumerate}

All the results of this section hold \emph{mutatis mutandis} for $\Gamma
$-mechanisms, with the $\ell_{1}$-norm $\left\Vert \cdot\right\Vert _{1}$
replaced by the dual $\Gamma$-seminorm $\rho_{\Gamma}(\cdot).$ Indeed, when
$q(x)$ and $q(\lambda y)$ belong to $\Gamma$ we have%
\[
(q(x)-q(\lambda y))\cdot(x-y)\leq\rho_{\Gamma}(x-y),
\]
and so the result of Lemma \ref{l:lambda} becomes%
\[
s(x)\leq s(\lambda y)+\frac{\lambda}{\lambda-1}\rho_{\Gamma}(x-y).
\]
Thus Theorem \ref{th:dist(X,Y)} generalizes to%
\[
\left\vert \sqrt{\text{\textsc{Rev}}_{\Gamma}(X)}-\sqrt{\text{\textsc{Rev}%
}_{\Gamma}(Y)}\right\vert \leq\sqrt{\mathrm{W}_{\Gamma}(X,Y)},%
\]
where%
\begin{align*}
\mathrm{W}_{\Gamma}(X,Y):=  &  \inf\,\{\,\mathbb{E}\left[  \rho_{\Gamma
}(X^{\prime}-Y^{\prime})\right]  \,:\,X^{\prime},Y^{\prime}\text{ are defined
on the same space,}\\
&  ~\;\;\;\hspace{1.3in}X^{\prime}\overset{\mathcal{D}}{\sim}X\text{ and
}Y^{\prime}\overset{\mathcal{D}}{\sim}Y\}.
\end{align*}

In case 1, the additive-demand case, we have $\mathrm{W}_\Gamma=\mathrm{W}$, which shows in particular that the deterministic revenue $\normalfont\textsc{DRev}$ satisfies the same inequality as the revenue $\Rev$:
\[
\left\vert \sqrt{\text{\normalfont\textsc{DRev}}(X)}-\sqrt{\text{\normalfont\textsc{DRev}%
}(Y)}\right\vert \leq\sqrt{\mathrm{W}(X,Y)}.%
\]
In the other two cases we have $\mathrm{W}_{\Gamma}\leq2\mathrm{W}%
_{L_{1},\ell_{\infty}}\leq2\mathrm{W}_{L_{1},\ell_{1}}\equiv2\mathrm{W}$
(because $\rho_{\Gamma}(z)\leq2\left\Vert z\right\Vert _{\infty}%
\leq2\left\Vert z\right\Vert _{1}$) (again, in the general as well as in the deterministic setups). For a general $\Gamma$ we get
$\mathrm{W}_{\Gamma}\leq2\gamma\mathrm{W}$ where $\gamma:=\max_{g\in\Gamma
}\left\Vert g\right\Vert _{\infty}$ (which is finite because $\Gamma$ is a
compact set).

Therefore the general convergence result is%
\[
\text{if~~~}\mathrm{W}_{\Gamma}(X_{n},X)\rightarrow0\text{~~~then~~~}%
\text{\textsc{Rev}}_{\Gamma}(X_{n})\rightarrow\text{\textsc{Rev}}_{\Gamma}(X)
\]
(as seen above, $\mathrm{W}_{\Gamma}(X_{n},X)\rightarrow0$ is implied
by $\mathrm{W}(X_{n},X)\rightarrow0$).

Since mechanism discounting does not affect the allocations of goods (it just
transforms the mapping from valuations to allocations), the result of Theorem
\ref{th:C} holds for $\Gamma$-mechanisms as well.

Therefore the results for the two setups discussed in the
Introduction---approximation and imprecision, and sampling and
learning---continue to hold in these more general models.

\bibliographystyle{alpha}
\bibliography{bib.bib}

\end{document}